\newcounter{parentAlgoLine}
  \protected@edef\theparentequation{\theequation}%
\newtheorem{fact}{Fact}
\newtheorem{proposition}{Proposition}
\theoremstyle{definition}
\newtheorem{definition}{Definition}
\def\Re{\textnormal{Re}}
\def\Im{\textnormal{Im}}
\def\real{\mathbb{R}}
\def\complex{\mathbb{C}}
\newcommand{\code}[1]{\texttt{#1}}
\newcommand{\norm}[1]{\|#1\|}
\def\qiskitdynamics{Qiskit Dynamics}
\def\perturbfunc{\code{solve\_lmde\_perturbation}}
\def\symD{\mathcal{D}}
\def\symE{\mathcal{E}}
\def\symI{\mathcal{I}}
\def\symQ{\mathcal{Q}}
\def\symO{\mathcal{O}}
\DeclareMathOperator\erf{erf}
\newcommand\id{\mathds{1}}
\DeclareFixedFont{\ttb}{T1}{txtt}{bx}{n}{10} 
\DeclareFixedFont{\ttm}{T1}{txtt}{m}{n}{10}  
\definecolor{deepblue}{rgb}{0,0,0.5}
\definecolor{deepred}{rgb}{0.6,0,0}
\definecolor{deepgreen}{rgb}{0,0.5,0}
\definecolor{commentgreen}{rgb}{0.25,0.5,0.5}
\newcommand\pythonstyle{\lstset{
language=Python,
basicstyle=\ttm,
tabsize=4,
morekeywords={self},              
commentstyle=\color{commentgreen},
keywordstyle=\ttb\color{deepblue},
emph={MyClass,__init__},          
emphstyle=\ttb\color{blue},    
stringstyle=\color{deepred},
frame=tb,                         
showstringspaces=false
}}
\newcommand\pythoninline[1]{{\pythonstyle\lstinline!#1!}}
\title{Algorithms for perturbative analysis and simulation of quantum dynamics}
\author[1,*]{Daniel Puzzuoli}
\author[2]{Sophia Fuhui Lin}
\author[3]{Moein Malekakhlagh}
\author[3]{Emily Pritchett}
\author[3]{Benjamin Rosand}
\author[3]{Christopher J. Wood}
\def\affilsize{\small}
\affil[1]{\affilsize IBM Quantum, IBM Canada, Markham, ON, L3R 9Z7, Canada}
\affil[2]{\affilsize Department of Computer Science, University of Chicago, Chicago, IL, 60615, USA}
\affil[3]{\affilsize IBM Quantum, IBM T.J. Watson Research Center, Yorktown Heights, NY, 10598, USA}
\affil[*]{\affilsize \href{mailto:daniel.puzzuoli1@ibm.com}{daniel.puzzuoli1@ibm.com}}
\begin{document}
\maketitle

\begin{abstract}
We develop general purpose algorithms for computing and utilizing both the Dyson series and Magnus expansion, with the goal of facilitating numerical perturbative studies of quantum dynamics. To enable broad applications to models with multiple parameters, we phrase our algorithms in terms of multivariable sensitivity analysis, for either the solution or the time-averaged generator of the evolution over a fixed time-interval. These tools simultaneously compute a collection of terms up to arbitrary order, and are general in the sense that the model can depend on the parameters in an arbitrary time-dependent way. We implement the algorithms in the open source software package \qiskitdynamics{}, utilizing the JAX array library to enable just-in-time compilation, automatic differentiation, and GPU execution of all computations. Using a model of a single transmon, we demonstrate how to use these tools to approximate fidelity in a region of model parameter space, as well as construct perturbative robust control objectives. 

We also derive and implement Dyson and Magnus-based variations of the recently introduced \code{Dysolve} algorithm [Shillito et al., \emph{Physical Review Research}, 3(3):033266] for simulating linear matrix differential equations. We show how the pre-computation step can be phrased as a multivariable expansion computation problem with fewer terms than in the original method. When simulating a two-transmon entangling gate on a GPU, we find the Dyson and Magnus-based solvers provide a speedup over traditional ODE solvers, ranging from roughly $2\times$ to $4\times$ for a solution and $10\times$ to $60\times$ for a gradient, depending on solution accuracy.
\end{abstract}

\section{Introduction}

Accurate and high-performance simulation of the physics of quantum systems is a key component in device and control engineering workflows in quantum computation. Due to the complexity of the time-dependent differential equations involved, perturbative techniques are commonly employed to simplify models of these systems. The Dyson series \cite{dyson_radiation_1949} and Magnus expansion \cite{magnus_exponential_1954, blanes_magnus_2009} are widely used time-dependent perturbation theory tools for studying quantum system dynamics. As perturbative expansions, they enable \emph{sensitivity analysis}: the quantification of how the dynamics change with local changes to system parameter values. They are utilized heavily in theoretical studies of quantum systems, as they enable perturbative extension of analytically tractable problems through the construction of ``effective Hamiltonians'' \cite{soliverez_general_1981,james_effective_2011,bravyi_schriefferwolff_2011,zeuch_exact_2020}. For example, in superconducting quantum computing, effective Hamiltonian models have been derived for microwave-activated single-qubit \cite{gambetta_analytic_2011} and two-qubit gates, such as the cross-resonance gate \cite{magesan_effective_2020,malekakhlagh_first-principles_2020}. In the field of quantum control, starting with Average Hamiltonian Theory \cite{haeberlen_1968}, these expansions have also been used to design open-loop \emph{robust} control sequences that suppress the effect of perturbations \cite{waugh_1968, mansfield_1971, rhim_1973, mehring_1983, takegoshi_1985,levitt_1986, cory_1991}, sequences for sensing by \emph{enhancing} perturbations \cite{cory_1990}, dynamical decoupling \cite{viola_1998,viola_1999,khodjasteh_2005}, dynamically corrected gates \cite{khodjasteh_2009, khodjasteh_2010}, the filter function formalism \cite{green_2013, pazsilva_2014}, and Magnus-based methods for suppressing non-adiabatic errors \cite{ribeiro_systematic_2017}, among others.

More recent work has focused on the numerical computation of these expansions, often with the goal of generalizing the previously mentioned applications to contexts which are not analytically tractable. For robust control, numerical methods have been developed to compute expansion terms in the \emph{interaction frame} \cite{evans_timedependent_1967,haeberlen_1968,mehring_1983} of a control sequence, e.g. the Dyson-like terms of \cite{haas_engineering_2019,tabatabaei_numerical_2020} and filter function formalism terms \cite{cywinski_how_2008,green_arbitrary_2013,hangleiter_filter-function_2021}, the latter of which have been implemented in software packages \cite{ball_software_2021,tobias_hangleiter_2022_6574304}. In the context of simulation of quantum systems, Ref. \cite{shillito_fast_2020} gives an algorithm for computing Dyson-series-derived expressions as part of the \emph{pre-computation} step of the \code{Dysolve} algorithm, with similar methods appearing in \cite{kalev_integral-free_2020}.

In this paper we develop software tools for numerically computing and utilizing the Dyson series and Magnus expansion. The general goal is to define and implement algorithms for broad versions of these computational problems, so that they may be used as primitives in many numerical research applications, including robust control and classical simulation of quantum systems. By making the construction of these terms easily accessible, their usefulness in new and existing methods can be explored to higher orders and applied to models with more parameters.

Towards this end, this paper is organized as follows:
\begin{itemize}
	\item Section \ref{section:multivariable} defines the multivariable Dyson series and Magnus expansion, and generally introduces the power series and interaction frame notation used throughout the paper.
	\item Section \ref{section:algorithms} gives algorithms for computing both multivariable Dyson series and Magnus expansion terms, analyzes their scaling, and describes the implementation in the Qiskit Dynamics package.
	\item Section \ref{section:control} demonstrates the usage of the software tools in an example of a robust control problem. In particular, it is shown how the tools can be used to approximate gate fidelity in a region of model parameter space, as well as to construct robustness objectives incorporating higher order expansion terms.
	\item Section \ref{section:numerical_integrators} derives Dyson and Magnus-based variants of the \code{Dysolve} algorithm \cite{shillito_fast_2020} for solving linear matrix differential equations, and describes an implementation of these algorithms in Qiskit Dynamics, building on the implementation described in Section \ref{section:algorithms}. We benchmark these solvers against the standard ODE solvers available in Qiskit Dynamics for the problem of simulating a two transmon CR gate. We find a speed up over traditional solvers in Qiskit Dynamics of roughly $2\times$ to $4\times$ for a solution and $10\times$ to $60\times$ for a gradient on GPU, depending on the accuracy of the solutions.
\end{itemize}

The paper ends with a discussion in Section \ref{section:discussion}.
\section{Multivariable Dyson series and Magnus expansion} \label{section:multivariable}

We begin by informally introducing the computational problems we consider in this paper. Consider a linear matrix differential equation (LMDE):
\begin{equation} \label{equation:informal_lmde}
	\dot{U}(t, c_0, \dots, c_{r-1}) = G(t, c_0, \dots, c_{r-1})U(t, c_0, \dots, c_{r-1}),
\end{equation}
where $t$ is time, and $c_0, \dots, c_{r-1}$ are some parameters of the generator $G$ and solution $U$, which are assumed to be perturbative. Note that the Schrodinger equation is an LMDE under the association $G=-iH$, for $H$ the Hamiltonian, along with other common master equations, such as the Lindblad and Bloch-Redfield equations.\footnote{The Lindblad and Bloch-Redfield equations, which are differential equations for density matrix evolution, are not typically presented in the form of Equation \eqref{equation:informal_lmde}. However, their right-hand sides are linear functions of the density matrix, and therefore they can be rewritten in the form of Equation \eqref{equation:informal_lmde} using a vectorization convention.} 

Given an integration interval $[t_0, t_f]$ and a power-series decomposition of $G$ in the parameters $c_0, \dots, c_{r-1}$ (centred at $0$):
\begin{equation}
	G(t, c_0, \dots, c_{r-1}) = G_\emptyset(t) + \sum_{k=1}^\infty \sum_{0 \leq i_1 \leq \dots \leq i_k \leq r-1}c_{i_1} \dots c_{i_k} G_{(i_1, \dots, i_k)}(t), \label{equation:generator_power_series}
\end{equation}
with the functions $G_\emptyset$ and $G_{(i_1, \dots, i_k)}$ being \emph{completely arbitrary} user-defined functions, the problem is to compute corresponding terms in the truncated power series for the solution $U$ itself:
\begin{equation}
	U(t_0, t_f) = U_\emptyset + \sum_{k=1}^\infty \sum_{0 \leq i_1 \leq \dots \leq i_k \leq r-1}c_{i_1} \dots c_{i_k} U_{(i_1, \dots, i_k)},
\end{equation}
or for the time-averaged generator $\Omega$: 
\begin{equation}
	\Omega = \Omega_\emptyset + \sum_{k=1}^\infty \sum_{0 \leq i_1 \leq \dots \leq i_k \leq r-1}c_{i_1} \dots c_{i_k} \Omega_{(i_1, \dots, i_k)},
\end{equation}
implicitly defined to satisfy $U(t_0, t_f) = \exp(\Omega)$.\footnote{These tasks can equivalently be phrased as performing sensitivity analysis, or numerically finding a series solution, albeit specialized to the case of a linear matrix differential equation.}

While we phrase the time-dependent operators $G_\emptyset(t)$ and $G_{(i_1, \dots, i_k)}(t)$ in Equation \eqref{equation:generator_power_series} as being arbitrary from a computational generality perspective, they are fixed by the parameterization of the generator $G(t, c_0, \dots, c_{r-1})$. That is, it holds that $G_\emptyset(t) = G(t, 0, \dots, 0)$ (i.e. $G_\emptyset(t)$ is the unperturbed generator), and each $G_{(i_1, \dots, i_k)}(t)$ is the partial derivative of $G(t, c_0, \dots, c_{r-1})$ with respect to the variables $c_{i_1}, \dots, c_{i_k}$ (up to the required combinatorial pre-factor for multivariable Taylor series). Hence, they are entirely determined by how $G$ is written in terms of the perturbative parameters $c_0, \dots, c_{r-1}$.

Lastly, we note that the formal algorithms in Section \ref{section:algorithms} solve the above problem in the interaction frame of $G_\emptyset(t)$\cite{evans_timedependent_1967,haeberlen_1968,mehring_1983}, which is reviewed in Section \ref{section:toggling_frame}.

\subsection{Power series notation} \label{section:power_series_notation}

Using Equation \eqref{equation:generator_power_series} as a model, we introduce notation to simplify working with multivariable power series. Each term in Equation \eqref{equation:generator_power_series} is indexed by a list of indices $0 \leq i_1 \leq \dots \leq i_k \leq r-1$. What uniquely identifies the above term is the number of times each index in $\{0, \dots, r-1\}$ appears, and hence a \emph{multi-index} notation \cite{wiki_multi_index_2021} is often used for multivariable power series. We use a slightly different, though functionality equivalent, notation in terms of \emph{multisets}. A multiset is like a set, but in which repeated elements may appear. 

Denoting multisets with round brackets, the multiset associated with the above power series term is given as: $I = (i_1, \dots, i_k)$. Given a list of variables $c_0, \dots, c_{r-1}$, we denote
\begin{equation}
	c_I = c_{i_1} \times \dots \times c_{i_k}.
\end{equation}
For example, $c_{(0, 0, 1)} = c_0^2c_1$, and $c_{(0, 1, 1, 2)} = c_0 c_1^2 c_2$. This notation enables a simple correspondence between algebraic operations and multiset operations. E.g. for two multisets $I, J$, we have that:
\begin{equation}
	 c_{I + J}=c_I \times c_J,
\end{equation}
where $I+J$ denotes the multiset summation.

With this, for $I = (i_1, \dots, i_k)$, the summation term $c_{i_1} \dots c_{i_k} G_{(i_1, \dots, i_k)}(t)$ in Equation \eqref{equation:generator_power_series} is rewritten as
\begin{equation}
	c_I G_I(t).
\end{equation}
Letting $\symI_k(r)$ denote the set of multisets of size $k$ with elements in $\{0, \dots, r-1\}$, and letting $c$ generally denote a list of variables $(c_0, \dots, c_{r-1})$, we may rewrite Equation \eqref{equation:generator_power_series} as
\begin{equation}
	G(t, c) \equiv G(t, c_0, \dots, c_{r-1}) = G_\emptyset(t) + \sum_{k=1}^\infty \sum_{I \in \symI_k(r)} c_I G_I(t). \label{equation:generator_power_series2}
\end{equation}

Lastly, we use $|I|$ to denote the number of elements in the multiset, including repeats. For example, $|(0, 0, 1)| = 3$. This notation is used throughout the paper to represent power series.

\subsection{Interaction frame} \label{section:toggling_frame}

In many quantum control applications, computations are performed in the \emph{interaction frame} of the unperturbed generator \cite{evans_timedependent_1967,haeberlen_1968,mehring_1983}. The benefit of the interaction frame is that it factorizes the evolution generated by $G(t, c)$ into two pieces: one given purely by $G_\emptyset(t)$, and the other being trivial if $c=0$. Denoting
\begin{equation}
	V(t) = \mathcal{T}\exp\left(\int_0^t ds G_\emptyset(s)\right),
\end{equation}
with $\mathcal{T}$ being the \emph{time-ordering operator} \cite{time_ordering_2022}, the generator $G$ in Equation \eqref{equation:generator_power_series2}, transformed into the interaction frame of $G_\emptyset$, is given by:
\begin{equation}
	\tilde{G}(t, c) = \sum_{k=1}^\infty \sum_{I \in \symI_k(r)} c_I \tilde{G}_I(t), \label{equation:toggling_generator}
\end{equation}
where $\tilde{G}_I(t) = V^{-1}(t)G_I(t)V(t)$. With this, it holds that:
\begin{equation}
	\mathcal{T}\exp\left(\int_0^t ds G(s, c)\right) = V(t)\mathcal{T}\exp\left(\int_0^t ds \tilde{G}(s, c)\right). \label{equation:toggling_frame}
\end{equation}

\subsection{Multivariable Dyson series} \label{section:multivariable_dyson}

For a generator $G(t)$, the Dyson series \cite{dyson_radiation_1949} expands 
\begin{equation}
	\mathcal{T}\exp\left(\int_0^t ds G(s)\right) = \id + \sum_{k=1}^\infty D_k(t), \label{equation:standard_dyson}
\end{equation}
where $\id$ is the identity operator, and for the explicit formula
\begin{equation}
	D_k(t) = \int_0^t dt_1 \dots \int_0^{t_{k-1}}dt_k G(t_1) \dots G(t_k). \label{equation:single_dyson}
\end{equation}
We may view this as a power series in a single variable $c$ by using this formula for the generator $c G(t)$:
\begin{equation}
	\mathcal{T}\exp\left(c\int_0^t ds G(s)\right) = \id + \sum_{k=1}^\infty c^kD_k(t),
\end{equation}
where the $D_k(t)$ are still as in Equation \eqref{equation:single_dyson}. 

To define the \emph{multivariable Dyson series}, the goal is to generalize the above equation and explicit expression in Equation \eqref{equation:single_dyson} to an arbitrary number of variables, with the original generator given as a power series.

\begin{definition} \label{definition:multivariable_dyson}
Let $\tilde{G}(t, c)$ be as in Equation \eqref{equation:toggling_generator}, with $c$ representing a list of variables. The \emph{multivariable Dyson series} is the power series of the solution in $c$:
\begin{equation}
	\mathcal{T}\exp\left(\int_0^t ds \tilde{G}(s, c)\right) = \id + \sum_{k=1}^\infty \sum_{I \in \symI_k(r)} c_I \symD_I(t),
\end{equation}
with the $\symD_I(t)$, which we refer to as \emph{multivariable Dyson terms}, defined \emph{implicitly} by the above series decomposition. Note that we also refer to $\symD_I(t)$ simply as \emph{Dyson terms} when there is no risk of ambiguity with the typical ``single-variable'' Dyson series. 
\end{definition}

For a multiset $I$, let $P_k(I)$ denote the set of $k$-fold ordered partitions of $I$, which we take to be empty if $|I| < k$. I.e. $P_k(t)$ denotes the set of ordered lists of proper submultisets $I_1, \dots, I_k \subset I$ such that $I_1 + \dots + I_k = I$. The following Proposition gives an explicit form for the multivariable Dyson terms $\symD_I(t)$.

\begin{proposition} \label{proposition:explicit_dyson}
Letting $\tilde{G}(t, c)$ and the $\symD_I(t)$ be defined as in Definition \ref{definition:multivariable_dyson}, it holds that
\begin{equation} \label{equation:explicit_dyson}
	\symD_I(t) = \sum_{m=1}^{|I|} \sum_{(I_1, \dots, I_m) \in P_m(I)} \int_0^t dt_1 \dots \int_0^{t_{m-1}} dt_m \tilde{G}_{I_1}(t_1) \dots \tilde{G}_{I_m}(t_m)
\end{equation}
\begin{proof}
For the generator $\tilde{G}(s, c)$, the standard Dyson series gives
\begin{equation}
	\mathcal{T}\exp\left(\int_0^t ds \tilde{G}(s, c)\right) = \id + \sum_{k=1}^\infty D_k(t)
\end{equation}
with
\begin{equation} \label{equation:multivariable_single}
	D_k(t) = \int_0^t dt_1 \dots \int_0^{t_{k-1}}dt_k \tilde{G}(t_1, c) \dots \tilde{G}(t_k, c). 
\end{equation}

First, we find an explicit power series decomposition of $D_k(t)$ in the variables $c$. We start by implicitly writing $D_k(t)$ as a power series in the variables $c$:
\begin{equation}
	D_k(t) = \sum_{m=k}^\infty \sum_{I \in I_m(r)} c_I A_{k, I}(t),
\end{equation}
where we start the sum at $m=k$, as all terms below this order will be $0$. To determine the form of the $A_{k,I}(t)$, we expand each instance of $\tilde{G}(t, c)$ in Equation \eqref{equation:multivariable_single} via the power series in Equation \eqref{equation:toggling_generator}, and then collect terms corresponding to each monomial $c_I$. The initial expansion results in a sum of terms of the form:
\begin{equation}
	c_{I_1 + \dots + I_k}\int_0^t dt_1 \dots \int_0^{t_{k-1}}dt_k \tilde{G}_{I_1}(t_1) \dots \tilde{G}_{I_k}(t_k),
\end{equation}
for some list of index multisets $I_j$. Determining the form of $A_{k,I}(t)$ requires identifying all such terms for which $I = I_1 + \dots + I_k$, which corresponds to identifying \emph{ordered partitions} of $I$. Collecting such terms results in the equality
\begin{equation}
	A_{k, I}(t) = \sum_{(I_1, \dots, I_k) \in P_k(I)} \int_0^t dt_1 \dots \int_0^{t_{k-1}}dt_k \tilde{G}_{I_1}(t_1) \dots \tilde{G}_{I_k}(t_k).
\end{equation}
Finally, Equation \eqref{equation:explicit_dyson} is obtained by collecting the $A_{k,I}(t)$ for $k \leq |I|$.
\end{proof}
\end{proposition}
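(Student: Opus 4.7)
The plan is to reduce the claim to the standard single-variable Dyson series applied to the generator $\tilde{G}(\cdot, c)$ with $c$ treated as a fixed parameter, and then to extract the coefficient of each monomial $c_I$ by a direct combinatorial expansion. The work is essentially bookkeeping of multi-indices; no analytic input beyond what is already used to write down \eqref{equation:standard_dyson}--\eqref{equation:single_dyson} is needed.

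First I would apply \eqref{equation:standard_dyson}--\eqref{equation:single_dyson} to $\tilde{G}(\cdot,c)$, obtaining
\[
\mathcal{T}\exp\left(\int_0^t ds\, \tilde{G}(s, c)\right) = \id + \sum_{k=1}^\infty \int_0^t dt_1 \cdots \int_0^{t_{k-1}} dt_k\, \tilde{G}(t_1, c) \cdots \tilde{G}(t_k, c).
\]
Next I would substitute the expansion \eqref{equation:toggling_generator} into each factor, writing $\tilde{G}(t_j, c) = \sum_{I_j} c_{I_j}\tilde{G}_{I_j}(t_j)$ with $I_j$ ranging over nonempty multisets of $\{0,\dots,r-1\}$, distribute the product, and pull the multi-index sums outside the nested integrals by linearity. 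The $k$-th term becomes a sum over tuples $(I_1,\dots,I_k)$ of nonempty multisets, each weighted by $c_{I_1}\cdots c_{I_k} = c_{I_1 + \cdots + I_k}$ and the corresponding iterated integral of $\tilde{G}_{I_1}(t_1)\cdots\tilde{G}_{I_k}(t_k)$.

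To match Definition \ref{definition:multivariable_dyson}, I would then fix a monomial $c_I$ and collect its coefficient. The contributing tuples are exactly the $k$-tuples of nonempty submultisets of $I$ summing to $I$, which by definition form $P_k(I)$, and since each $I_j$ is nonempty such tuples exist only when $k \leq |I|$. Hence the outer sum over $k$ truncates at $|I|$, and relabelling $k$ as $m$ yields \eqref{equation:explicit_dyson}.

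The main (and only mildly delicate) step is justifying the coefficient extraction: matching the implicitly defined power series of Definition \ref{definition:multivariable_dyson} with the one assembled by the formal manipulation. This is handled by noting that for each fixed $k$ only finitely many tuples of a given total size contribute to any monomial $c_I$, so uniqueness of power series coefficients applies term-by-term. The remaining combinatorial identification of nonempty $k$-tuples summing to $I$ with $P_k(I)$ is immediate from the definition of ordered multiset partitions.
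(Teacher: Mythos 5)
Your proposal is correct and follows essentially the same route as the paper's proof: apply the standard single-variable Dyson series to $\tilde{G}(\cdot,c)$, expand each factor via Equation \eqref{equation:toggling_generator}, collect the coefficient of each monomial $c_I$ as a sum over ordered partitions $P_k(I)$, and observe that only $k \leq |I|$ contributes. Your added remark on justifying coefficient extraction via uniqueness of power series coefficients is a small extra care the paper leaves implicit, but it does not change the argument.
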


\subsection{Multivariable Magnus expansion} \label{section:multivariable_magnus}

For a generator $G(t)$, the Magnus expansion \cite{magnus_exponential_1954, blanes_magnus_2009} alternatively gives an expansion for an operator:
\begin{equation}
	\Omega(t) = \sum_{k=1}^\infty \Omega_k(t)
\end{equation}
for which, under suitable convergence conditions, gives:
\begin{equation}
	\mathcal{T}\exp\left(\int_0^t ds G(s)\right) = \exp(\Omega(t)). \label{equation:magnus_starting_point}
\end{equation}
Explicit expressions for the $\Omega_k(t)$ can also be given \cite{magnus_exponential_1954, blanes_magnus_2009}; however we do not work with them here. Similarly to the Dyson series, we can consider the Magnus expansion for the generator $cG(t)$ as a single-variable power series expansion. The operator $\Omega(t)$ is often referred to as the \emph{time-averaged generator}. 

\begin{definition} \label{definition:multivariable_magnus}
Let $\tilde{G}(t, c)$ be as in Equation \eqref{equation:toggling_generator}, with $c$ representing a list of variables. The \emph{multivariable Magnus expansion} for the generator $\tilde{G}(t, c)$ is the power series of the time-averaged generator of the solution in $c$:
\begin{equation}
	\Omega(t, c) = \sum_{k=1}^\infty \sum_{I \in \symI_k(r)} c_I \symO_I(t)
\end{equation}
satisfying 
\begin{equation}
	\mathcal{T}\exp\left(\int_0^t ds \tilde{G}(s, c)\right) = \exp(\Omega(t, c)). 
\end{equation}
The $\symO_I(t)$ are defined implicitly according to the above relations, and we refer to them as \emph{multivariable Magnus terms}, or simply as \emph{Magnus terms} when there is no risk of ambiguity with the standard Magnus expansion.
\end{definition}

We do not derive an explicit expression for $\symO_I(t)$ --- we only define them implicitly by the above relation. The algorithm we develop for computing them is based on recursion relations rather than explicit expressions.
\section{Algorithms, scaling, and implementation} \label{section:algorithms}

We now introduce the two algorithms that are the main results of this work: one for computing a collection of multivariable Dyson terms $\symD_I(t)$, and one for computing a collection of multivariable Magnus terms $\symO_I(t)$ recursively from already-computed $\symD_I(t)$. 

\subsection{Computing multivariable Dyson terms} \label{section:dyson}

For a desired collection of $\symD_I(T)$, the algorithm presented here computes all terms simultaneously, along with the solution of the interaction frame propagator $V(T)$, by phrasing them as the solution to a single differential equation, which is then solved via numerical integration. This approach is similar to block-matrix methods used in both the numerical methods \cite{vanloan_1978,carbonell_2008} and quantum literature \cite{haas_engineering_2019,goodwin_2015,machnes_2018}. It may also be viewed as a fully time-dependent generalization of the computation performed in the \emph{pre-computation} step of \code{Dysolve} \cite{shillito_fast_2020}, with similar methods appearing in \cite{kalev_integral-free_2020}. Furthermore, due to the correspondence between power series and derivatives, 
the algorithm we present can alternatively be viewed as a forward-mode sensitivity analysis method \cite{hindmarsh_sundials_2005}, specialized to linear matrix differential equations, that can compute arbitrary order derivatives in a multivariable setting.\footnote{The sensitivity-analysis phrasing implies that these terms can be computed using automatic differentiation applied to differential equation solvers. We have found however that, in practice, recursively calling general automatic differentiation routines to compute higher order derivatives is slow compared to the specialized method presented here. See \cite{rackauckas_comparison_2018} for a discussion of automatic differentiation tools applied to the sensitivity analysis problem for differential equations.}

A technical detail of our algorithm is that, for each index multiset $I$, it computes $\symE_I(T) = V(T)\symD_I(T)$, rather than $\symD_I(T)$, where $V(t)$ is the interaction frame propagator. Depending on application, one may want either $\symE_I(T)$ or $\symD_I(T)$. As $V(T)$ is also computed by the algorithm, this factor can be numerically removed if desired by solving the linear equation 
\begin{equation}
	V(T) X = \symE_I(T)
\end{equation} 
for $X$ ($V(T)$ is always in-principle invertible). When there is no risk of confusion, we will refer to both $\symD_I(T)$, and $\symE_I(T)$ as multivariable Dyson terms.

The algorithm is based on the following proposition, which shows that the derivatives of the $\symE_I(t)$ satisfy a recursion relation. The proof is given in Appendix \ref{appendix:derivations_dyson}.

\begin{proposition} \label{proposition:recursive_derivative}
Let $\tilde{G}(t, c)$ and the $\symD_I(t)$ be as in Definition \ref{definition:multivariable_dyson}. For $\symE_I(t) = V(t)\symD_I(t)$, with $V(t)$ being the interaction frame propagator, it holds that
\begin{equation}
	\dot{\symE}_I(t) = G_\emptyset(t)\symE_I(t) + G_I(t)V(t) + \sum_{J \subsetneq I} G_J(t)\symE_{I \setminus J}(t), \label{equation:dyson_derivative_rule}
\end{equation}
where subsets are understood in terms of multisets. For the case $|I|=1$, the sum over $J \subsetneq I$ is interpreted as being empty (as there are no proper submultisets).
\end{proposition}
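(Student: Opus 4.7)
The plan is to exploit the fact that the full solution $U(t,c)$ of the LMDE, once expressed via the multivariable Dyson terms, assembles into a power series in $c$ whose $t$-derivative must satisfy $\dot{U}(t,c) = G(t,c)U(t,c)$ order-by-order. The recursion \eqref{equation:dyson_derivative_rule} should then drop out directly by matching coefficients of each monomial $c_I$.

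First I would use Definition \ref{definition:multivariable_dyson}, the factorization \eqref{equation:toggling_frame}, and $\symE_I(t) = V(t)\symD_I(t)$ to write
\begin{equation}
U(t,c) = V(t) + \sum_{k=1}^\infty \sum_{I \in \symI_k(r)} c_I \symE_I(t),
\end{equation}
so that the left-hand side of the LMDE reads $G_\emptyset(t)V(t) + \sum_I c_I \dot{\symE}_I(t)$ after using $\dot{V}(t) = G_\emptyset(t)V(t)$. For the right-hand side, I would expand $G(t,c)$ via Equation \eqref{equation:generator_power_series2} and multiply it against the series above. Adopting the convention $\symE_\emptyset(t) = V(t)$ for the ``zeroth'' term, each term in the product has the form $c_{J+K} G_J(t)\symE_K(t)$, so the coefficient of $c_I$ on the right-hand side is $\sum_{J+K=I} G_J(t)\symE_K(t)$, where $J$ and $K$ range over all (possibly empty) multisets.

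For fixed non-empty $I$, the pairs $(J,K)$ with $J+K=I$ split naturally into three groups: the single pair $(\emptyset, I)$ contributes $G_\emptyset(t)\symE_I(t)$; the single pair $(I, \emptyset)$ contributes $G_I(t)V(t)$; and the remaining pairs, in which both $J$ and $K$ are non-empty strict submultisets of $I$, are in bijection with the non-empty proper submultisets $J \subsetneq I$ via $K = I \setminus J$. Equating the coefficient of $c_I$ on each side then yields exactly Equation \eqref{equation:dyson_derivative_rule}, with the sum over $J \subsetneq I$ interpreted as ranging over non-empty proper submultisets. The $|I|=1$ edge case is immediate: the only decompositions of $I$ are the two endpoint pairs, so the middle sum is empty.

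The main obstacle is purely the bookkeeping of the empty-multiset endpoint cases, namely ensuring that $(J,K)=(\emptyset,I)$ and $(I,\emptyset)$ are attributed to the explicit $G_\emptyset(t)\symE_I(t)$ and $G_I(t)V(t)$ terms rather than double-counted in the submultiset sum. This is a verification rather than a substantive analytic difficulty, and no manipulation beyond the formal power-series multiplication above is required.
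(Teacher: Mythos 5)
Your proposal is correct, but it takes a genuinely different route from the paper. The paper first establishes the explicit iterated-integral formula for $\symD_I(t)$ (Proposition \ref{proposition:explicit_dyson}), then peels off the first factor of each ordered partition to obtain the integral recursion $\symD_I(t) = \int_0^t dt_1\, \tilde{G}_I(t_1) + \sum_{J \subsetneq I}\int_0^t dt_1\, \tilde{G}_J(t_1)\symD_{I\setminus J}(t_1)$, and finally left-multiplies by $V(t)$ and differentiates. You instead bypass the explicit formula entirely: you treat the $\symE_I(t)$ as the coefficients of the solution's power series via Equation \eqref{equation:toggling_frame}, substitute into $\dot{U}(t,c) = G(t,c)U(t,c)$, and match coefficients of $c_I$ after the formal product of the two series. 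Your route is more elementary and self-contained --- it needs neither Proposition \ref{proposition:explicit_dyson} nor any bookkeeping of ordered partitions, only uniqueness of power-series coefficients and the convention $\symE_\emptyset(t) = V(t)$ --- and you correctly flag the one real subtlety, namely that the endpoint pairs $(\emptyset, I)$ and $(I,\emptyset)$ must be assigned to the explicit terms rather than folded into the sum over non-empty proper submultisets. What the paper's approach buys in exchange is that differentiability of each $\symE_I(t)$ is manifest (each term is an explicit iterated integral), whereas your argument tacitly assumes the solution's series may be differentiated term by term in $t$; at the formal power-series level at which the paper operates this is unobjectionable, but it is worth a sentence of acknowledgement.
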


Hence, to compute a collection of such terms for a desired list of index multisets $L = \{I_1, \dots, I_j\}$, we need only solve the above differential equation. The formalized algorithm is given in Algorithm \ref{algorithm:dyson_algorithm}. Beyond organizing and structuring the right-hand side function expressed in Equation \eqref{equation:dyson_derivative_rule}, the algorithm must first \emph{complete} the list of index multisets $L = \{I_1, \dots, I_j\}$ in the following sense: Equation \eqref{equation:dyson_derivative_rule} shows that $\dot{\symE}_I(t)$ depends on $\symE_J(t)$ for every $J \subseteq I$. Hence, given a particular set of desired terms $L = \{I_1, \dots, I_j\}$ to compute, constructing a single coupled differential equation to compute them requires first filling out the list $L$ until it is closed under taking subsets. More formally, we say that $L$ is \emph{complete} if for every $I \in L$ and $J \subseteq I$, it holds that $J \in L$. Hence, the first step of Algorithm \ref{algorithm:dyson_algorithm} is finding the completion of the desired terms $L$.

\setcounter{algocf}{0}
\begin{algorithm}[h!]
\caption{Compute multivariable Dyson terms \label{algorithm:dyson_algorithm}}
\KwInput{
\begin{itemize}
	\item Callable matrix-valued function $G_\emptyset(t)$ giving the interaction frame generator.
	\item A list of pairs $(I, G_I(t))$, with $I$ being an index multiset, and $G_I(t)$ a callable matrix-valued function, describing the power series decomposition of the generator. All other power series terms for the generator are assumed by the algorithm to be $0$.
	\item List of index multisets $L = (I_1, \dots, I_j)$ describing the desired Dyson terms to compute.
	\item Integration time $T$.
	\item Boolean flag \code{remove\_V} for whether to return $\symE_I(T)$ or $\symD_I(T)$ for each $I \in L$.
\end{itemize}
}
\KwOutput{
\begin{itemize}
	\item Interaction frame propagator solution $V(T) = \mathcal{T}\exp\left(\int_{0}^{T} dt_1 G_\emptyset(t_1)\right)$.
	\item The completion of $L$, $L'$.
	\item Multivariable Dyson terms (in the frame of $V(t)$), either $\symD_{I}(T)$ if \code{remove\_V == True} or $\symE_{I}(T)$ otherwise, for each $I \in L'$.
\end{itemize}
}
Compute the completion $L'$ of $L$ by recursively looping through each $I \in L$ in order of non-increasing length, adding $I \setminus i$ to $L$ for each $I \in L$

Construct the differential equation:
\begin{substeps}
Canonically order the completed set $L' = I_1, \dots, I_m$

Represent the state of the DE $y(t) = (V(t), \symE_{I_1}(t), \dots, \symE_{I_m}(t))$ with initial condition $y_0 = (I, 0, \dots, 0)$, with $I$ the appropriately-sized identity matrix.

Define RHS function $f(t, y) = (G_\emptyset(t)V(t), G_\emptyset(t)\symE_{I_1}(t) + \sum_{J \subsetneq I_1}G_J(t) \symE_{I_1 \setminus J}(t), \dots)$, treating unspecified $G_J(t)$ as $0$.
\end{substeps}

Solve the differential equation.

\If{\textnormal{\code{remove\_V == True}}}
{
For each $I \in L'$,compute $\symD_I(T)$ by solving $V(T)X = \symE_I(T)$.

Return all the $\symD_I(T)$, and $V(T)$.
}
\Else
{
Return all the $\symE_I(T)$, and $V(T)$.
}
\end{algorithm}

\subsection{Computing multivariable Magnus terms from multivariable Dyson terms} \label{section:magnus_recursion}

Here we develop multivariable generalizations of recursive methods for computing Magnus terms from Dyson terms given in \cite{burum_magnus_1981, salzman_alternative_1985}, though following the notation in \cite[Section 2.4]{blanes_magnus_2009}. Recursive methods provide a compact representation of the computational steps while avoiding the complexity of explicit expressions for Magnus expansion terms \cite{blanes_magnus_2009,arnal_general_2018}.

To review, the recursive methods of \cite{burum_magnus_1981, salzman_alternative_1985} begin by expanding both sides of Equation \eqref{equation:magnus_starting_point}: the left hand is expanded using the Dyson series, and the right-hand side is expanded assuming a series for $\Omega$ and the Taylor series for the exponential. After grouping terms by order and rearranging, \cite{burum_magnus_1981} gives the formula:
\begin{equation}
	\Omega_k = D_k - \sum_{m=2}^k \frac{1}{m!} Q_k^{(m)},
\end{equation}
where
\begin{equation}
	Q_k^{(m)} = \sum_{i_1 + \dots + i_m = k} \Omega_{i_1} \dots \Omega_{i_m}. \label{equation:original_recursion}
\end{equation}
It is then shown in \cite{burum_magnus_1981} that the $Q_k^{(m)}$ matrices satisfy the recursion relation
\begin{equation}
	Q_k^{(m)} = \sum_{j=1}^{k-m+1} Q_j^{(1)} Q_{k-j}^{(m-1)},
\end{equation}
with base case $Q_j^{(1)} = \Omega_j$. 

The following proposition, proven in Appendix \ref{appendix:derivations_magnus}, shows that direct analogues of these formulas hold in the multivariable case.

\begin{proposition} \label{proposition:recursive_magnus}
Let $\symD_I(t)$ and $\symO_I(t)$ be as in Definitions \ref{definition:multivariable_dyson} and \ref{definition:multivariable_magnus}. For all index multisets $I$, it holds that
\begin{equation}
	\symO_I = \symD_I - \sum_{m=2}^{|I|} \frac{1}{m!} \symQ_I^{(m)} \label{equation:sym_magnus_recursion}
\end{equation}
where the matrices $\symQ_I^{(m)}$ are defined as
\begin{equation}
	\symQ_I^{(m)} = \sum_{(I_1, \dots, I_m) \in P_m(I)} \symO_{I_1} \dots \symO_{I_m}, \label{equation:sym_q_def}
\end{equation}
and satisfy the recursion relation
\begin{equation}
	\symQ_I^{(m)} =  \sum_{J \subsetneq I, |J| \leq |I| - (m-1)} \symQ_J^{(1)} \symQ_{I\setminus J}^{(m-1)} \label{equation:sym_q_recursion}
\end{equation}
with base case $\symQ_J^{(1)} = \symO_J$.
\end{proposition}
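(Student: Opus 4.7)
The plan is to mirror the single-variable derivation of \cite{burum_magnus_1981}, translating each step into the multiset language of Section \ref{section:power_series_notation}. The starting point is the defining identity
\begin{equation*}
	\mathcal{T}\exp\left(\int_0^t ds\, \tilde{G}(s, c)\right) = \exp\left(\Omega(t, c)\right).
\end{equation*}
My approach is to expand both sides as power series in $c$ and match coefficients of each monomial $c_I$. The left-hand side is expanded via Definition \ref{definition:multivariable_dyson}, whose $c_I$-coefficient is exactly $\symD_I(t)$. The right-hand side is expanded via the Taylor series of the exponential applied to $\Omega(t,c) = \sum_{k \geq 1}\sum_{I \in \symI_k(r)} c_I \symO_I(t)$.

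Concretely, the $m$-th term in the exponential contributes $\tfrac{1}{m!}\Omega^m$, which expands as
\begin{equation*}
	\Omega^m = \sum_{I_1, \dots, I_m} c_{I_1 + \dots + I_m}\,\symO_{I_1}\dots\symO_{I_m},
\end{equation*}
with the $I_j$ ranging over all nonempty index multisets. Isolating the coefficient of a fixed $c_I$ reduces to summing over ordered $m$-partitions of $I$, which by definition \eqref{equation:sym_q_def} is exactly $\symQ_I^{(m)}$. Since $I$ cannot be written as a sum of more than $|I|$ nonempty multisets, the sum over $m$ truncates at $|I|$, and matching coefficients yields $\symD_I = \sum_{m=1}^{|I|} \tfrac{1}{m!}\symQ_I^{(m)}$. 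Separating the $m=1$ term and using $\symQ_I^{(1)} = \symO_I$ then rearranges to Equation \eqref{equation:sym_magnus_recursion}.

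For the recursion \eqref{equation:sym_q_recursion}, I would begin from the definition of $\symQ_I^{(m)}$ and split off the first component of each ordered partition. Writing $J = I_1$, the remaining tuple $(I_2, \dots, I_m)$ ranges precisely over $P_{m-1}(I \setminus J)$. Two restrictions on $J$ emerge naturally: $J \subsetneq I$, so that the remaining pieces are nonempty, and $|J| \leq |I| - (m-1)$, so that $P_{m-1}(I \setminus J)$ is itself nonempty. The inner sum then factorizes as $\symO_J \cdot \symQ_{I \setminus J}^{(m-1)} = \symQ_J^{(1)} \symQ_{I \setminus J}^{(m-1)}$, producing the claimed formula.

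The main technical obstacle I anticipate is the bookkeeping in the coefficient-matching step: one must verify carefully that the expansion of $\Omega^m$ in monomials tracks ordered tuples of \emph{nonempty} multisets (with no spurious empty-multiset contributions), and invoke uniqueness of the multivariable power-series decomposition to legitimately equate coefficients. A minor care point is that Definition \ref{definition:multivariable_dyson} defines $\symD_I$ purely implicitly through the series, so the argument only needs that series expansion and does not depend on the explicit formula of Proposition \ref{proposition:explicit_dyson}.
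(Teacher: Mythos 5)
Your proposal is correct and follows essentially the same route as the paper's proof in Appendix \ref{appendix:derivations_magnus}: expand both sides of the defining identity, identify the $c_I$-coefficient of $\frac{1}{m!}\Omega^m$ as $\frac{1}{m!}\symQ_I^{(m)}$ via ordered partitions, match coefficients, and derive the $\symQ$-recursion by splitting off the first block of each ordered partition. The care points you flag (nonemptiness of the partition blocks and uniqueness of the power-series decomposition) are exactly the implicit steps the paper relies on.
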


Hence, our algorithm for computing multivariable Magnus terms is to first compute the corresponding multivariable Dyson terms, then compute the Magnus terms utilizing the above recursion relation. A formal statement of the algorithm, with full bookkeeping details for implementing the recursion relation, is given in Algorithm \ref{algorithm:magnus_from_dyson}.

\begin{algorithm}[h!] 
\caption{Compute multivariable Magnus terms from Dyson terms \label{algorithm:magnus_from_dyson}}
\KwInput{
\begin{itemize}
	\item Complete list of index multisets $L = (I_1, \dots, I_m)$.
	\item Computed Dyson terms $\symD_{I_1}(T), \dots, \symD_{I_m}(T)$.
\end{itemize}
}
\KwOutput{
\begin{itemize}
	\item Magnus terms $\symO_I(T)$ for each $I \in L$.
\end{itemize}
}
Sort $L$ in order of non-decreasing length.

Generate list of ordered pairs $(I, m)$ that index each $\symQ_I^{(m)}$ to compute: 
\begin{substeps}
For each $I \in L$, append $(I, |I|), (I, |I|-1), \dots, (I, 1)$ to the list of pairs.
\tcc{This list is in an order for which the recursion relation for $\symQ_I^{(m)}$ depends only on $\symQ_J^{(n)}$ for which $(J, n)$ appears earlier in the list.}
\end{substeps}

Increment through each pair $(I, m)$ in the list, computing $\symQ_I^{(m)}$:

\If{$|I| = 1$ and $m =1$}
{
Set $\symQ_I^{(1)}(T) = \symD_I(T)$
}
\ElseIf{$|I| > 1$ and $m=1$}
{
Set $\symQ_I^{(1)} = \symD_I(T) - \sum_{m=2}^{|I|} \frac{1}{m!} \symQ_I^{(m)}$
}
\Else
{
Set $\symQ_I^{(m)} =  \sum_{J \subset I, |J| \leq |I| - (m-1)} \symQ_J^{(1)} \symQ_{I\setminus J}^{(m-1)}$
}

\Return{$\symO_I(T) = \symQ_I^{(1)}$ for each $I \in L$.}
\end{algorithm}

\subsection{Scaling} \label{section:scaling}

Here we consider the scaling of the algorithms for computing all terms in the multivariable Dyson series and Magnus expansion in $r$ variables up to truncation order $n$, assuming the user supplies non-zero $G_I(t)$ for all $|I| \leq n$. 

First, the inherent scaling of the problem is determined by the total number of terms at a given truncation order for a given number of variables:
\begin{fact}
The number of terms in a homogeneous multivariate polynomial of order $n$ in $r$ variables is \cite{225963}:
\begin{equation}
	{r + n \choose n} -1. \label{equation:choice_scaling}
\end{equation}
\end{fact}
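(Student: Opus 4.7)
The plan is a direct combinatorial counting argument. As established in Section~\ref{section:power_series_notation}, each non-constant monomial of total degree at most $n$ in the variables $c_0, \dots, c_{r-1}$ is in bijection with a non-empty multiset $I$ with $|I| \leq n$ whose elements are drawn from $\{0, \dots, r-1\}$. The task therefore reduces to counting such multisets.

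The cleanest route is to introduce a ``slack'' variable that reduces an inequality to an equality. For each monomial $c_0^{a_0} \cdots c_{r-1}^{a_{r-1}}$ of degree at most $n$, set $a_r = n - (a_0 + \cdots + a_{r-1}) \geq 0$. The map
\[
	(a_0, \dots, a_{r-1}) \longmapsto (a_0, \dots, a_{r-1}, a_r)
\]
is a bijection between the set of monomials of degree at most $n$ in $r$ variables and the set of $(r+1)$-tuples of nonnegative integers summing to exactly $n$. By the standard stars-and-bars argument, the latter has cardinality ${n+r \choose n}$. Subtracting one for the single tuple corresponding to the constant monomial (i.e.\ $a_0 = \cdots = a_{r-1} = 0$ and $a_r = n$) yields ${n+r \choose n} - 1$, matching the claim.

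An equivalent argument, closer to the multiset language used throughout the paper, is to sum degree-by-degree: the number of multisets of size exactly $k$ drawn from $r$ elements is ${k+r-1 \choose k}$, and the hockey-stick identity
\[
	\sum_{k=0}^{n} {k + r - 1 \choose k} = {n + r \choose n}
\]
gives the desired total once the $k=0$ contribution (the empty multiset) is removed. Neither route presents a genuine obstacle; the only point worth flagging is that the ``$-1$'' corresponds exactly to excluding the empty multiset $\emptyset$, so the formula counts precisely the perturbative terms $\symD_I$ (or $\symO_I$) actually tracked by Algorithms~\ref{algorithm:dyson_algorithm} and~\ref{algorithm:magnus_from_dyson}, consistent with the convention in Section~\ref{section:multivariable} that the power series sums begin at $k=1$.
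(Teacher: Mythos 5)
Your proof is correct. Note that the paper itself offers no proof of this Fact---it simply cites an external reference \cite{225963}---so there is no internal argument to compare against; your stars-and-bars derivation (and the equivalent hockey-stick summation over multiset sizes) is the standard one and fills that gap cleanly. Two small observations. First, you have silently and correctly repaired the statement's wording: a literally \emph{homogeneous} polynomial of degree $n$ in $r$ variables has ${n+r-1 \choose n}$ terms, whereas ${r+n \choose n}-1$ counts all non-constant monomials of degree \emph{at most} $n$; your reading is the one consistent with how the Fact is used in Section \ref{section:scaling} and it is worth flagging, as you do, that the ``$-1$'' is exactly the exclusion of the empty multiset. Second, your multiset-counting route is implicitly the identity the paper relies on in its proof of Fact \ref{fact:term_bounds}, where ${r+n \choose n}-1$ is rewritten as $\sum_{m=1}^n \sum_{1 \leq i_1 \leq \dots \leq i_m \leq r} 1$ without comment; your argument justifies that step as well.
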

If either $r$ or $n$ is fixed, the number of terms in the other parameter grows polynomially, which can be seen via the following bound (proof given in Appendix \ref{appendix:scaling_bounds}).
\begin{fact} \label{fact:term_bounds}
It holds that 
\begin{equation}
	{r + n \choose n}  \leq \min(nr^n, rn^r).
\end{equation}
\end{fact}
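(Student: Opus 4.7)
The plan is to exploit the symmetry $\binom{r+n}{n} = \binom{r+n}{r}$ to reduce the two bounds inside the $\min$ to a single inequality: it suffices to prove $\binom{r+n}{n} \leq nr^n$, since the companion bound $\binom{r+n}{n} \leq rn^r$ then follows by applying the same inequality with the roles of $r$ and $n$ swapped.

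For the one-sided inequality, I would start from the explicit product formula
\[\binom{r+n}{n} = \prod_{k=1}^n \frac{r+k}{k} = (r+1) \prod_{k=2}^n \frac{r+k}{k},\]
and bound the trailing factors one at a time. The key observation is that for $k \geq 2$ and $r \geq 2$, each factor $\frac{r+k}{k}$ is at most $r$: rearranging gives the equivalent condition $r \leq k(r-1)$, which holds because $k(r-1) \geq 2(r-1) = 2r-2 \geq r$. Substituting factor-by-factor yields $\binom{r+n}{n} \leq (r+1)r^{n-1}$, and then $(r+1)r^{n-1} \leq nr^n$ simplifies (after dividing by $r^{n-1}$) to $r+1 \leq nr$, i.e.\ $1 + 1/r \leq n$, which holds for all $n \geq 2$ and $r \geq 1$.

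The argument is really just a direct computation, so there is no serious obstacle; the only subtlety is that the per-factor bound $\frac{r+k}{k} \leq r$ fails at $k=1$, which is why the first factor has to be peeled off separately. It is worth flagging that the inequality does not hold at the boundary $r=1$ or $n=1$ (for example, $\binom{3}{2}=3$ while $\min(2\cdot 1, 1\cdot 2) = 2$), so Fact \ref{fact:term_bounds} should be understood as applying in the regime $r, n \geq 2$, which is precisely the regime relevant to the surrounding asymptotic claim that the number of terms grows polynomially whenever one of $r$ or $n$ is held fixed. An alternative route I considered is induction on $n$ using $\binom{r+n+1}{n+1} = \frac{r+n+1}{n+1}\binom{r+n}{n}$, which works just as cleanly but is less compact than the direct factor-by-factor bound above.
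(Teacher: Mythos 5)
Your proof is correct, but it takes a genuinely different route from the paper. The paper argues combinatorially: it identifies ${r+n \choose n} - 1$ with the number of multisets $1 \leq i_1 \leq \dots \leq i_m \leq r$ of size $1 \leq m \leq n$, drops the ordering constraint to bound this by $\sum_{m=1}^n r^m \leq nr^n$, and then invokes the same $r \leftrightarrow n$ symmetry you use for the second bound. You instead work directly with the product formula ${r+n \choose n} = \prod_{k=1}^n \frac{r+k}{k}$ and bound each factor with $k \geq 2$ by $r$, peeling off the $k=1$ factor separately. The paper's counting argument is the more natural one in context (the quantity being bounded is literally a count of expansion terms), while your factor-by-factor bound is self-contained and forces you to track exactly where the inequality holds, which pays off: your observation that the stated bound fails for $r=1$ or $n=1$ (e.g.\ ${3 \choose 2} = 3 > 2$) is a genuine, if minor, defect of the Fact as written, and the paper's own chain, read literally, only delivers ${r+n \choose n} \leq 1 + nr^n$ rather than $\leq nr^n$. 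Neither issue affects the surrounding asymptotic claim, but your version is the more careful of the two.
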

On the other hand, for large $r$ and $n$, Stirling's approximation gives
\begin{equation}
	{r + n \choose n} \approx \frac{(r+n)^{r+n}}{r^r n^n},
\end{equation}
and thus if both $r$ and $n$ vary, e.g. setting $r=n$, the right hand side yields $2^{2n}$, which is exponential. Hence, asymptotically there is an inherent exponentiality to the problem along the line $r=n$, however for either fixed $r$ or $n$, the scaling in the other parameter is polynomial. 

Finally, for scaling of the algorithms themselves, we have the following bounds on the number of operations:
\begin{fact} \label{fact:algorithm_scaling}
When computing all terms up to order $n$ in $r$ variables:
\begin{itemize}
	\item For computing Dyson terms, evaluating the RHS in Equation \eqref{equation:dyson_derivative_rule} requires
	\begin{equation}
		O\left(\left[{r + n \choose n} -1\right]^2\right)
	\end{equation}
	operations.
	\item Carrying out the recursive procedure outlined in Section \ref{section:magnus_recursion} requires
	\begin{equation}
		O\left(\left[{r + n \choose n} -1\right]^3\right)
	\end{equation}
	operations.
	\end{itemize}
\end{fact}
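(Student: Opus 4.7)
Let $N = \binom{r+n}{n} - 1$ denote the total number of nonempty index multisets of size at most $n$ with elements in $\{0, \dots, r-1\}$. The plan is to count matrix operations (products and sums of fixed-dimension matrices) by reducing the multiset sums in each algorithm to $N$ via a single recurring bijection: for any multiset $I$ with $|I| \leq n$ and submultiset $J \subseteq I$, the pair $(I, J)$ is in bijection with the ordered pair $(J, K)$ where $K = I \setminus J$, satisfying $|J| + |K| \leq n$. Since the number of nonempty multisets of size at most $n$ is $N$, the number of such ordered pairs is bounded by $N^2$, and this will be the workhorse estimate for both bounds.

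For the Dyson part, I would count the matrix products required to evaluate Equation~\eqref{equation:dyson_derivative_rule} once for each $I$ in the completed list. Each $\dot{\symE}_I$ involves two products (for $G_\emptyset \symE_I$ and $G_I V$) plus one product per proper submultiset $J \subsetneq I$. Summing over all $I$ with $1 \leq |I| \leq n$ yields $2N$ plus the number of pairs $(I, J)$ with $J \subsetneq I$ and $|I| \leq n$. Applying the bijection above and observing $2N + N^2 = O(N^2)$ completes this bound.

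For the Magnus part, I would count matrix operations across all pairs $(I, m)$ processed by Algorithm~\ref{algorithm:magnus_from_dyson}. The dominant cost comes from Equation~\eqref{equation:sym_q_recursion} in the case $m \geq 2$, where each $\symQ_I^{(m)}$ costs one matrix product per admissible $J \subsetneq I$ with $|J| \leq |I| - (m-1)$. Setting $k = |I|$, $j = |J|$, and substituting $\ell = k - m + 1$ converts the range $m \in \{2, \dots, k\}$ into $\ell \in \{1, \dots, k-1\}$ and converts the constraint on $|J|$ into $j \leq \ell$. The triple sum over $(I, m, J)$ then factors: the sum over $\ell$ contributes at most a factor of $n$, and collapsing the remaining sum over $(I, J)$ via the same bijection bounds the rest by $N^2$. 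The $m = 1$ corrections from Equation~\eqref{equation:sym_magnus_recursion} contribute only $O(N^2)$ additions and are absorbed. Using $n \leq N$, which holds for $r \geq 1$ since $\binom{r+n}{n} \geq n+1$, yields $n N^2 \leq N^3$.

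The main obstacle is the bookkeeping needed to collapse the Magnus triple sum cleanly: one must first peel off the constraint $|J| \leq |I| - (m-1)$ via the substitution in $\ell$ so that $m$ decouples from $J$, then recognize that the resulting sum over $I$ of fixed size with a distinguished submultiset of fixed size is precisely the pair-count controlled by the bijection. The remaining steps are routine summation bounds, and the Dyson count is a direct specialization of the same bijection.
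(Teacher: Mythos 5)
Your proposal is correct and follows essentially the same route as the paper's proof: both count matrix operations term by term, bound the double sum for the Dyson right-hand side by $N^2$ (where $N = \binom{r+n}{n}-1$), and bound the Magnus recursion cost by $n \cdot N^2 \leq N^3$ using $n \leq N$. Your bijection $(I,J) \mapsto (J, I\setminus J)$ and the substitution $\ell = k-m+1$ are just slightly more explicit versions of the paper's coarser step of bounding the number of admissible submultisets of each $I$ by the total number of terms $N$.
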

The proofs of these facts are given in Appendix \ref{appendix:scaling_bounds}. Hence, the algorithms scale polynomially in the number of terms being computed.

We note again this analysis applies to the case of computing \emph{all} terms up to a given order for a given number of variables. In practical applications it may be possible to \emph{a priori} eliminate the need to compute certain subsets of terms, e.g. if they can be shown to be negligible, or, in the case of the Magnus expansion, if terms are known to commute.

\subsection{Implementation} \label{section:implementation}

Algorithms 1 and 2 have been implemented in the function \perturbfunc{} in the \code{perturbation} module of the open-source software package \qiskitdynamics{} \cite{qiskit_dynamics_2021}.\footnote{This function can also be used to compute the Dyson-type terms of \cite{haas_engineering_2019}.} We describe the API here; however note that it may change over time, and hence the package documentation is the best source for up-to-date information.

The goal of the \perturbfunc{} interface is to closely represent the mathematical problems of computing the multivariable Dyson or Magnus terms from the generator power series. Roughly, the inputs are:
\begin{itemize}
	\item A description of the power series for the generator. This description is given by supplying a description of the non-zero $G_I(t)$, in terms of a list of indices given as \code{Multiset} objects \cite{multiset}, and a corresponding list of python-callable functions implementing the $G_I(t)$. Any $G_I(t)$ not explicitly given in this description are assumed to be $0$.
	\item A python callable function implementing the interaction frame operator $G_\emptyset(t)$.
	\item A choice of either Dyson or Magnus expansion.
	\item A description of which terms in the expansion to compute.
	\item Arguments describing how to perform the integration for computing Dyson terms, in the form of the differential equation solver to use, and any optional arguments for the solver.
\end{itemize}
A data storage object is returned providing access to the computed terms.

Additionally, the \code{perturbation} module contains the class \code{ArrayPolynomial} for representing an array-valued polynomial of scalar variables. This class has functionality for both evaluating and manipulating the polynomial. As we will show in the demo in Section \ref{section:control}, this enables direct evaluation of truncated Dyson or Magnus expansions for specific values of the power-series variables, as well as other non-trivial computations involving them. 

Finally, as will be described in more detail in Section \ref{section:numerical_integrators}, numerical integration schemes for linear matrix differential equations based on the Dyson series and Magnus expansion have also been built into Qiskit Dynamics, using the above perturbation functionality. All functionality can be executed using the JAX \cite{jax2018github} array backend. Code examples used to generate the plots in the following section are given in Appendix \ref{app:code}.
\section{Demonstration in robust control problem} \label{section:control}

The goal of robust quantum control is to design control sequences that perform their function in a \emph{region} of model parameter space. A common approach in numerical robust control is to utilize the Dyson series or Magnus expansion to quantify the sensitivity of a control sequence to variations in model parameters. Control sequences are then designed to reduce the impact of terms in these expansions on the overall evolution. In this section we show how the software tools developed here can be used in these workflows. We emphasize that this section is a \emph{software demo}; it shows how the tools can be used in application, and is not itself meant to be new research.

Using a model of a closed-system transmon with a variety of model parameters, we demonstrate how to:
\begin{enumerate}
	\item Compute and use the Magnus expansion to approximate the fidelity of a control sequence, relative to a target gate, in a region of model parameter space. The accuracy of various truncation orders of the Magnus expansion is demonstrated for each model parameter.
	\item Construct a robustness function that utilizes Magnus expansion terms to arbitrary order. Note that we are not advocating for the practical usefulness of this particular construction, but are demonstrating how the tools \emph{enable} such constructions. The usefulness of any such construction must be determined by further research.
\end{enumerate}

The full code for generating the plots can be found in the \code{control\_example.ipynb} Jupyter Notebook in the supplemental repository \cite{supplemental_repo}. In what follows we describe the code as it pertains to the algorithms presented here, but for a full detail see the Notebook.

\subsection{Transmon model and control parameterization}

Here we consider a model of a closed-system transmon, which we model using the Schrodinger equation for unitary evolution:
\begin{equation}
	\dot{U}(t) = -i H(t) U(t),
\end{equation}
where the Hamiltonian $H(t)$ is Hermitian for all $t$. We will phrase things in terms of $H(t)$, rather than $G(t) = - i H(t)$, as in the preceding sections; however note that all previous discussions and formulae can be translated under the symbolic association $G = -iH$.

Let $N$ be the number operator, and $a$ be the annihilation operator, which are infinite dimensional matrices defined as
\begin{equation}
	N = \left(\begin{array}{ccccc}
		0 & 0 & 0 & 0 & \dots \\
		0 & 1 & 0 & 0 &  \dots \\
		0 & 0 & 2 & 0 & \dots \\
		0 & 0 & 0 & 3 & \ddots \\
		\vdots & \vdots & \vdots & \ddots & \ddots
	\end{array}\right) \textnormal{, and }
	a = \left(\begin{array}{ccccc}
		0 & 1 & 0 & 0 & \dots \\
		0 & 0 & \sqrt{2} & 0 &  \dots \\
		0 & 0 & 0 & \sqrt{3} & \dots \\
		0 & 0 & 0 & 0 & \ddots \\
		\vdots & \vdots & \vdots & \ddots & \ddots
	\end{array}\right).
\end{equation}
The model of a single transmon that we use is as follows:
\begin{equation}
	H(t) = 2 \pi \nu N + \pi \alpha N(N - \id) + \frac{\pi}{3}\beta N(N - \id)(N - 2 \id) + s(b, t) 2 \pi r (a + a^\dagger),
\end{equation}
where $\id$ is the identity, $\nu$ is the qubit frequency, $\alpha$ is the anharmonicity, $\beta$ sets the spacing to the $4^{th}$ energy level, $r$ is the drive strength, and finally $s(b, t)$ is the control field, with $b$ representing a vector of parameters describing the control (which we will explicitly parameterize later). For simulation we truncate the transmon to be $5$-dimensional, and choose parameters typical of IBM transmons \cite{malekakhlagh_first-principles_2020}: $\nu = 5.0$, $\alpha = -0.33$, $\beta = -0.015$, and $r = 0.02$, where all times are in ns and frequencies are in GHz.

We expand the model by adding a list of uncertain parameters: (1) perturbations in the frequency $\nu$, (2) perturbations in the anharmonicity $\alpha$, (3) perturbations in the drive strength $r$, (4) addition of a non-linear control term proportional to $s(b, t)^2$, representing non-linearities in the control electronics, (5) perturbations to the higher level spacing given by $\beta$, and (6) perturbations to higher level drive operator elements. These parameters represent common uncertainties in a single transmon model, and we select an assortment to demonstrate the numerical behaviour of our algorithms. Denoting the perturbation parameters according to the above ordering as $c = (c_1, \dots, c_6)$, this modifies the model to:
\begin{equation}
\begin{aligned}
	H(t, c) =& \, 2 \pi \nu ( 1+ c_1) N + \pi \alpha (1 + c_2) N(N - \id) + s(b, t) 2 \pi r (1 + c_3) (a + a^\dagger) \\
	                   & + c_4s(b, t)^2 2 \pi r (a + a^\dagger) + (1 + c_5) \frac{\pi}{3} \beta N(N - \id)(N - 2 \id) \\
	                   & + c_6s(b, t) 2 \pi r P(a + a^\dagger)P,
\end{aligned}
\end{equation}
where $P$ is the orthogonal projection onto the levels above the first two. By collecting terms according to the coefficients in $c$, the relevant structure of the Hamiltonian that we will apply perturbation theory to is:
\begin{equation}
	H(t, c) = H_\emptyset(t) + \sum_{j=1}^6 c_j H_{(j)}(t),
\end{equation}
where $H_\emptyset(t) = H(t, 0)$ is the unperturbed Hamiltonian.

Lastly, we choose a parameterization of the control signal $s(b, t)$. While the exact details aren't particularly important for the demonstration, we choose a parameterization that has desirable properties for an optimization application: it produces smooth and bounded signals that start and end at $0$, and is automatically differentiable with respect to the control parameters. We use the standard representation utilized in Qiskit Dynamics:
\begin{equation}
	s(b, t) = \textnormal{Re}[f(b, t)e^{i 2 \pi \mu t}], \label{equation:general_signal_form}
\end{equation}
where $f(b, t)$ is the parameterized complex-valued envelope, and the signals carrier frequency is $\mu$, the expected frequency of the transmon. Here, we set $\mu = \nu$, the true frequency of the modelled qubit. We construct the envelope f(b, t) as a piecewise constant complex-valued function via the following process. Starting with the input parameters $b$, which are a 2-dimensional real array of shape $(2, k)$, we build and iteratively refine the samples for $f(b,t)$ as follows:
\begin{enumerate}[a)]
	\item First, given a choice of basis vectors $\{d_j\}_{j=1}^k$, construct a pair of \emph{unbounded} sample arrays $x_R = \sum_{j=1}^k b_{0j} d_j$ and $x_I = \sum_{j=1}^k b_{1j} d_j$. That is, $x_R$  and $x_I$ are the starting point for the real and imaginary samples for $f(b, t)$, and the control parameters $b$ are the coefficients in the two linear combinations. Here, we choose $\{d_j\}_{j=1}^k$ to be the discretized Chebyshev basis over the interval $[0, 50]$ ns with sample width $1$ ns, and use $k=8$. Note, these samples are truly unbounded, as we are allowing the entries of $b$ to take \emph{any} real value.
	\item Next, given a diffeomorphism $h : \real \mapsto [-1, 1]$, construct a pair of \emph{bounded} sample arrays as $y_R = h(x_R)$ and $y_I = h(x_I)$ (where we apply $h$ to a vector by applying it independently to all entries). By requiring $h$ to be a diffeomorphism, we ensure that all of the values of $y_R$ and $y_I$ lie in the interval $[-1, 1]$, and that this stage of the construction is automatically differentiable. For this diffeomorphism, we choose $h(x) = \frac{\arctan(x)}{\pi/2}$.
	\item Finally, construct the smoothened complex-valued samples for $f(b, t)$ by convolving $y_R$ and $y_I$ with a smooth kernel. Here, we resample $y_R$ and $y_I$ to be sampled at a rate of $0.125$ ns, then convolve each with a discretized Gaussian function (with an amplitude of $1$ and standard deviation of $0.5$ ns) containing $24$ samples. We normalize the samples of the Gaussian convolution kernel so they sum to $1$, which ensures that the entries of the convolution output lie within, and can achieve all values in, the interval $[-1, 1]$. Denoting $C$ as the resampling and convolution mapping, the final samples for $f(b, t)$ with sample width $0.125$ ns are given by $z = C(y_R) + i C(y_I)$.
\end{enumerate}
The above process is a useful recipe for creating smooth and bounded piecewise constant functions, where the samples are differentiable functions of the unbounded input parameters. See Figure \ref{figure:control_parameterization} for a visualization of the samples constructed at each step of this process. For all of the demonstrations we use random input parameters to the envelope parameterization.

\begin{figure}[h!]
\centering
\begin{subfigure}{.45\linewidth}
    \centering
     \includegraphics[scale=0.5]{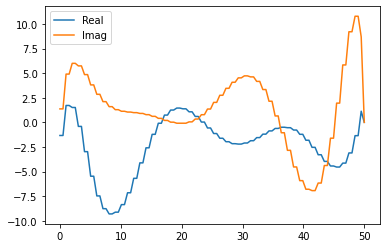}
    \caption{The unbounded real and imaginary sample arrays $x_R = \sum_{j=1}^k b_{0j} d_j$ and $x_I = \sum_{j=1}^k b_{1j} d_j$.}
\end{subfigure}
    \hfill
\begin{subfigure}{.45\linewidth}
    \centering
    \includegraphics[scale=0.5]{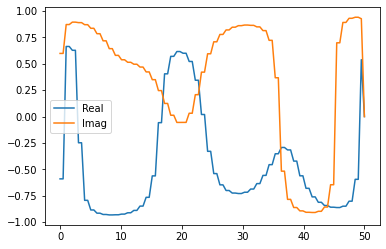}
    \caption{The bounded real and imaginary sample arrays $y_R = h(x_R)$ and $y_I = h(x_I)$, with $h(x) = \frac{\arctan(x)}{\pi / 2}$.}
\end{subfigure}

\bigskip
\begin{subfigure}{\linewidth}
  \centering
  \includegraphics[scale=0.5]{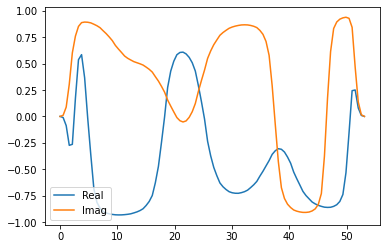}
  \caption{The final convolved real and imaginary sample arrays $C(y_R)$ and $C(y_I)$.}
\end{subfigure} 
\caption{Sample creation pipeline for the piecewise constant envelope $f(b, t)$, for a random choice of input parameters $b$. The sequential transformations (a), (b), and (c), are described in the main text after Equation \eqref{equation:general_signal_form}.}
\label{figure:control_parameterization}
\end{figure}

\subsection{Fidelity approximation via Magnus expansion} \label{section:fidelity_approximation}

Here we use the Magnus expansion to generate approximations of gate infidelity in a region of model parameter space around $c=0$, for a random choice of control parameters $b$.\footnote{Note that we choose to demonstrate Magnus expansion approximation of the infidelity -- rather than the unitary itself -- as the infidelity is typically the most important single metric in numerical control optimization schemes. Furthermore, we consider a random choice of control parameters $b$ (as opposed to a choice close to an implementation of a particular gate), as we are generally interested in how the Magnus expansion captures fidelity in a region of model parameter space, regardless of gate quality. Gate optimization procedures need to navigate through both high and low fidelity regions, and as such it is important to test the approximation quality for randomly chosen control parameters, which are a common starting point for optimization.} Infidelity of a unitary $U$ is defined relative to an $X$ gate on the first two levels of the transmon:
\begin{equation} \label{equation:fidelity_function}
	E(U) = 1-\frac{|Tr(X^\dagger U|_2)|^2}{4},
\end{equation}
where $U|_2$ is the top left $2 \times 2$ block of the unitary $U$. The truncated Magnus expansion, which we denote $\Omega(c)$, as a function of the model parameters $c$ for a fixed $b$, is computed \emph{in the frame of $H_\emptyset(t)$}. The unitary $U(T, c)$ is then approximated as:
\begin{equation}\label{equation:unitary_approx}
	U(T, c) \approx \mathcal{T}\exp\left(-i\int_0^T ds H_\emptyset(s)\right)\exp(\Omega(c)),
\end{equation}
and the approximate infidelity is computed by inputting Equation \eqref{equation:unitary_approx} into the fidelity function in Equation \eqref{equation:fidelity_function}. See Appendix \ref{app:code_solve_lmde_perturbation} for an outline of the code used to compute the above approximation to $U(T, c)$.\footnote{Note that we use the \code{expm} function in JAX to compute the matrix exponential, which utilizes the combination of Pad\'{e} approximation and scaling and squaring given in \cite{al-mohy_new_2009}. Once the Magnus expansion is computed to a desired order, the cost of evaluating Equation \eqref{equation:unitary_approx} for a given $c$ is a function of the number of terms in the expansion, and the matrix dimension. Computing $\Omega(c)$ consists primarily of taking a linear combination of the $d \times d$ matrices in the expansion, which depends linearly on the number of terms in the expansion, and quadratically on the dimension $d$. Computing the matrix exponential doesn't depend on the number of terms, but typically involves many matrix multiplications, which roughly scale as $d^3$, and which we expect to dominate the cost of the computation.}

Figure \ref{figure:perturbation_fidelities} plots the 1d infidelity curve for each perturbation parameter, using the ``true'' unitary $U(T, c)$, as calculated with the JAX \code{odeint} solver with the lowest possible tolerance setting.\footnote{The \code{odeint} solver is based on the Dormand-Prince method outlined in \cite{shampine_practical_1986}, which is a mixed $4^{th}$ and $5^{th}$ order Runge-Kutta variable step-size method.\label{footnote:odeint}} Figure \ref{figure:magnus_fidelity_error} gives 1d plots demonstrating the quality of the above infidelity approximation scheme for various orders of the Magnus expansion. The general trend is that, within a neighbourhood of $0$, the higher order Magnus truncations provide a better approximation. However, as the perturbation parameters become larger, the approximations begin to break down, including the ordering of which order of the Magnus expansion provides a better approximation. Furthermore, some of the infidelity error curves exhibit spurious kinks, as in the Order 1 curve for Perturbation 2. As with any application of perturbation theory, these effects demonstrate that care must always be taken in choosing a truncation order, and in limiting the region of approximation. The $c_5$ perturbation provides an interesting example of numerical breakdown: the perturbation appears to have very little impact on the infidelity, as evidenced by the extremely flat infidelity curve in Figure \ref{figure:perturbation_fidelities}. Over this region, however, we still observe breakdown of the approximations, with higher orders breaking down more quickly. This is plausibly due to the flatness of the infidelity curve: the Magnus expansion terms are near zero, and therefore the computed matrices are dominated by numerical error, which are enhanced at higher order due to the number of matrices involved.

\begin{figure}[h!]
\centering
\includegraphics[scale=0.45]{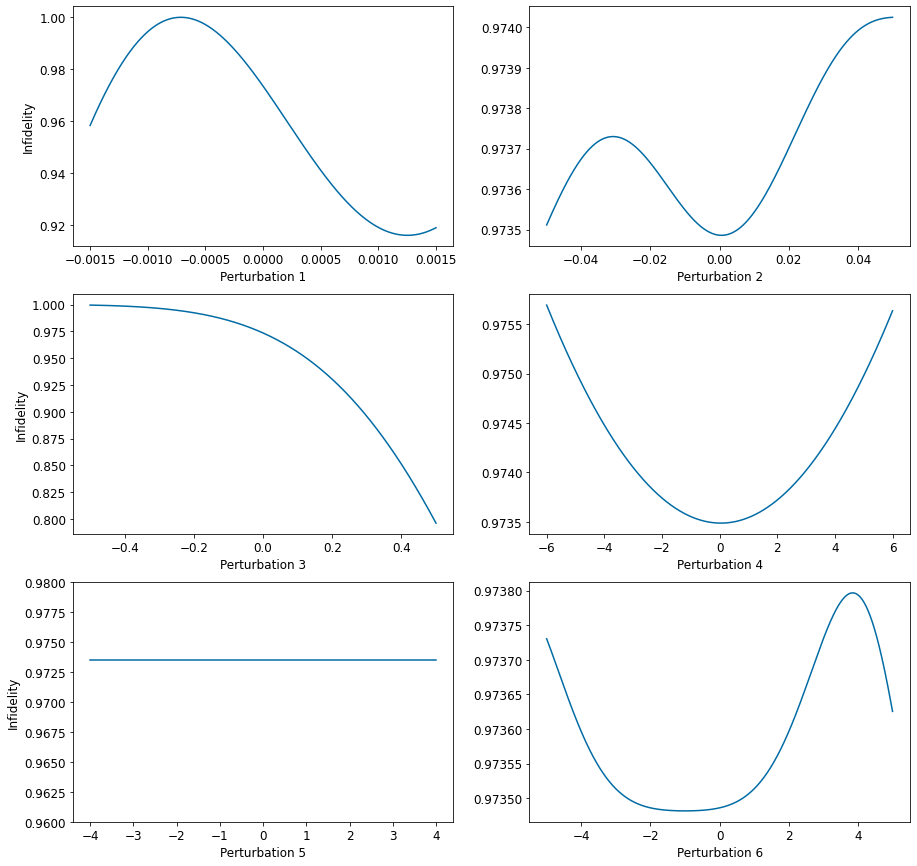}
\caption{Infidelity curves along individual perturbation parameters, holding the others as $0$. For each value of the perturbation parameters $c$ represented by the x-axis in the above plots, the ``true'' unitary $U(T, c)$ is computed using the \code{odeint} solver in JAX with absolute and relative tolerances set to \code{1e-14}, and the infidelity is computed as $E(U(T,c))$. The ranges displayed have been chosen by trial and error based on the impact of the perturbation on the fidelity.}
\label{figure:perturbation_fidelities}
\end{figure}

\def\scalehere{0.25}

\begin{figure}[h!]
\includegraphics[scale=0.45]{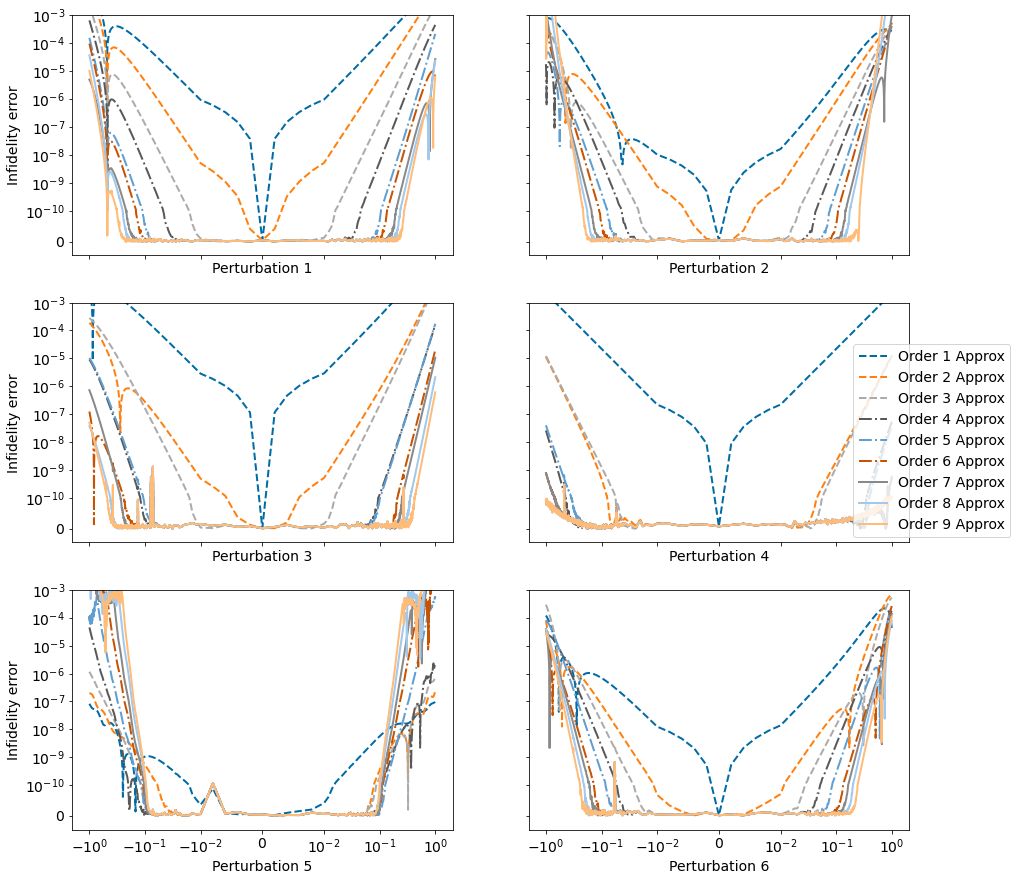}
\caption{Absolute error of the Magnus approximations to infidelity in Figure \ref{figure:perturbation_fidelities} for each perturbation.For a given Magnus order and perturbation value $c$, the approximation $U$ to $U(T, c)$ is computed as in Equation \eqref{equation:unitary_approx}, and the absolute infidelity error is computed $|E(U(T, c)) - E(U)|$. The $y$-axis of each plot is on a logarithmic scale. The $x$-axis for each plot is a percentage of the maximum value displayed for each perturbation in Figure \ref{figure:perturbation_fidelities}, and shown on a logarithmic scale, with linear scaling within $[-10^{-2}, 10^2]$. Negative $x$-axis values correspond to deviations with a minus sign.}
\label{figure:magnus_fidelity_error}
\end{figure}

For perturbations $c_1$ and $c_3$, Figure \ref{figure:magnus_fidelity_error_2d} demonstrates the quality of infidelity approximation given by the Magnus expansion in a 2d plane. For different expansion orders, this plot uses a colormap to distinguish between regions of over and underestimation of the infidelity. With increasing expansion order, we see a growing 2d region of high quality approximation.

\begin{figure}[h!]
\centering
\includegraphics[scale=0.4]{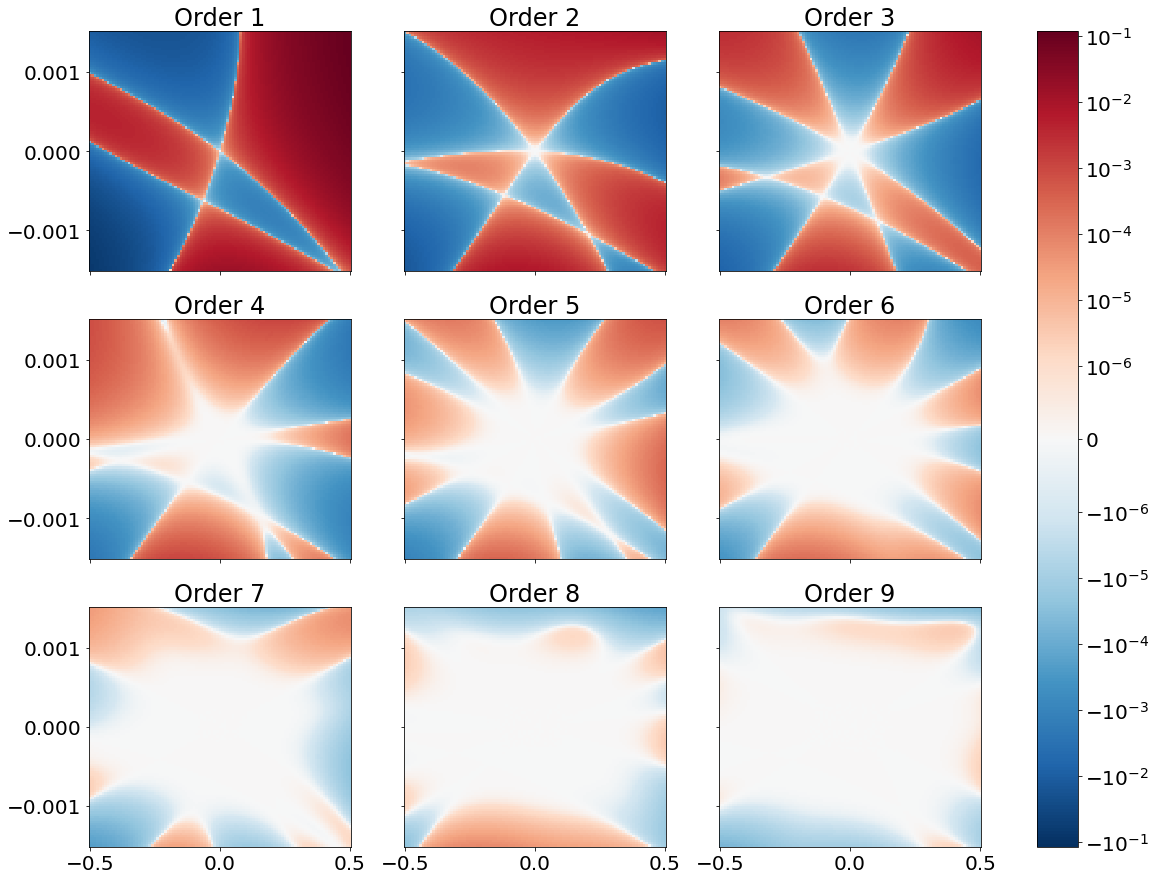}
\caption{For perturbations $c_1$ (y-axis) and $c_3$ (x-axis), the 2d infidelity error, on a symmetric base-10 log scale. Note that in contrast to Figure \ref{figure:magnus_fidelity_error_2d}, for a given perturbation value $c$ and approximation $U$ to the ``true'' unitary $U(T, c)$, these figures plot $E(U) - E(U(T, c))$, without the absolute value. Hence, the plot differentiates regions of both over and underestimation of the infidelity. The scale of the $x$- and $y$-axes are linear, and the colour scale is a symmetric log scale, which is linear over $[-10^{-6}, 10^{-6}]$. The white region roughly indicates where the approximations are achieving an error of magnitude below $10^{-6}$.}
\label{figure:magnus_fidelity_error_2d}
\end{figure}

\subsection{Magnus robustness objective construction} \label{section:robustness}

We now demonstrate the construction of a robustness objective function that aggregates arbitrary order perturbation data. Following the notation in the previous section, but introducing explicit functional dependence on the control parameters $b$, let $\Omega(c, b)$ denote a truncated Magnus expansion in the model parameters $c$, in the interaction frame of $-i \tilde{H}(t, c, b)$. Equation \eqref{equation:unitary_approx} shows that the impact of the perturbations on the evolution is to apply $\exp(\Omega(c, b))$ before the evolution given by $-i \tilde{H}(t, 0, b)$. The robust control literature referenced in the introduction builds robustness objectives based on this fact: if $\Omega(c, b)$ acts trivially on the system, then the perturbations have no effect. Typical approaches consider low order terms and attempt to set them all to $0$. In numerical applications, the goal is to construct an objective function that is minimized if all are $0$. Building these computations is typically done by hand at low orders. We show here how the software tools presented in this paper can be used to extend these approaches to higher orders, for an arbitrary number of perturbations. We emphasize that the practical utility of this approach needs to be investigated: here we simply wish to show how these tools can enable such an investigation.

To build a robustness objective, we need to define a function which is minimized if $\Omega(c, b)$ has no effect on the subspace of interest. In this case, it is the computational subspace given by the first two levels of the transmon model. Furthermore, we need this objective to quantify this over some region of $c$ values. For this, we choose the following:
\begin{equation}
	g(b) = \int_{c \in D} dc p(c) \left\|\Omega(c, b)P - \frac{Tr(\Omega(c, b)P)}{2} P \right\|_2^2,
\end{equation}
where in the above:
\begin{itemize}
	\item $D$ is the region of parameter space of interest, and $p$ is a probability distribution over $D$,
	\item $P$ is the orthogonal projection onto the first two levels, and
	\item $\norm{\cdot}_2$ is the Frobenius norm, defined as $\norm{A}_2 = \sqrt{Tr(A^\dagger A)}$.
\end{itemize}
To understand the integrand, note that, for a matrix $X$, it holds that:
\begin{equation}
	\left\|XP - \frac{Tr(XP)}{2} P \right\|_2^2 = 0
\end{equation}
if and only if $XP = aP$ for some $a \in \complex$, i.e. if $XP$ is proportional to $P$. For quantum systems, this is equivalent to $X$ ``acting trivially'' on the system: such an operator can only impact global phase of the first two levels, and therefore has no impact on the physical evolution restricted to that subspace. Hence, the objective $g$ is $0$ if and only if $\Omega(c, b)$ acts trivially on the computational subspace over a whole region of parameter space.

Effectively choosing the region $D$ and distribution $p$ is itself a challenging problem in multiobjective optimization. This choice effectively sets a linearization strategy for aggregating all Magnus terms of different orders. The distribution impacts which perturbations are favoured in the optimization, and deciding which to weigh more depends on what impact they have on relevant quantities, such as fidelity. Here, we choose $p$ to be a product distribution of Gaussians, i.e.:
\begin{equation}
	p(c) = \prod_{i=1}^r p_i(c_i),
\end{equation}
with $p_i(c_i)$ given by a Gaussian distribution, with the integration region $D$ being rectangular.

To compute $g(b)$, we first identify parts of it that can be pre-computed and reused in each evaluation. Note the following: if $\Omega(c)$ is a multivariable array-valued polynomial in the variables $c$, then so is:
\begin{equation}
	h(c, b) = \left\|\Omega(c, b)P - \frac{Tr(\Omega(c, b)P)}{2} P \right\|_2^2.
\end{equation}
That is, $h(c, b)$ necessarily has a decomposition of the form:
\begin{equation}
	h(c, b) = \sum_{k}\sum_{I \in \symI_k(r)}c_Ih_I(b),
\end{equation}
where we have explicitly included the dependence of the $h_I$ on the control parameters $b$. Assuming such a decomposition (which we will need to compute), we have:
\begin{equation}
	g(b) = \int_{c \in D} dc p(c) h(c, b) = \sum_{k} \sum_{I \in \symI_k(r)} \left(\int_{c \in D} dc p(c) c_I \right) h_I(b).
\end{equation}
Observe that pre-factors $\int_{c \in D} dc p(c) c_I$, which are \emph{moments} of the distribution $p$, are independent of the control parameters $b$, and hence can be pre-computed once and reused in every evaluation of the objective. Hence, given these pre-computed coefficients, we need only compute the $h_I(b)$, and then compute the dot product indicated by the above equation. Note that, under the assumption that both $D$ and $p$ are symmetric under flipping parameter axes, any moments for which the $c_I$ contain odd powers will be $0$, eliminating many terms that need to be computed.

In Appendix \ref{app:robustness_objective_code}, we walk through how Qiskit Dynamics is used to compute the $h_I(b)$, highlighting the correspondence between the mathematical procedure described above and the code. Appendix \ref{app:robustness} shows the scaling of the computation time of both $g(b)$ and its gradient. The scaling is demonstrated with respect to the number of perturbations, for different orders of the Magnus expansion. The CPU plots are generated in the \code{control\_example.ipynb} notebook of the supplemental repository \cite{supplemental_repo}, and the GPU plots in the notebook \code{control\_example\_gpu.ipynb}.
\section{Numerical integrators for systems with fast carrier frequencies} \label{section:numerical_integrators}

The recently introduced \code{Dysolve} algorithm \cite{shillito_fast_2020} utilizes the Dyson series to construct a fixed-step numerical integration method specialized to linear matrix differential equations \emph{with fast carrier frequencies}. By fixing a particular structure of the generator, \code{Dysolve} pre-computes elements of a truncation of the Dyson series, which can then be utilized to repeatedly solve the system for different time-varying envelopes.

In Section \ref{section:perturbative_solvers}, we review the computational problem of \code{Dysolve}, and show how it, and a Magnus-based equivalent, can be phrased in terms of multivariable power series decompositions, and hence the pre-computation step can be performed using the algorithms in Section \ref{section:algorithms}. This leads to a reduction in the number of terms required as reported in \cite{shillito_fast_2020}, in some cases from exponential to polynomial in the truncation order.

In Section \ref{section:integrator_implementation} we describe our implementation of these integrators in the \qiskitdynamics{} package, and in Section \ref{section:solver_demo}, we demonstrate the performance of the implementation, both for solving, as well as for computing gradients with respect to control parameters via automatic differentiation. Finally, in Section \ref{section:perturbative_solver_memory} we discuss the memory requirements of these solvers.

\subsection{Perturbative solvers} \label{section:perturbative_solvers}

The computational problem addressed in \cite{shillito_fast_2020} is to simulate a linear matrix differential equation whose generator is decomposed as
\begin{equation}
	G(t) = F + \sum_{j=1}^s \Re[f_j(t) e^{i \omega_j t}]A_j,
\end{equation}
where $f_j(t)$ are complex-valued envelope functions, $\omega_j$ are carrier frequencies, and $F$ and the $A_j$ are constant matrices. More specifically, the goal is to simulate the above system for different instances of the envelopes $f_j(t)$, while keeping $F$, $A_j$, and the carrier frequencies $\omega_j$ \emph{fixed}. With the assumption that $F$, $A_j$, and $\omega_j$ are fixed, \code{Dysolve} \cite{shillito_fast_2020} utilizes the Dyson series to \emph{pre-compute} aspects of the evolution. Given a fixed time-step $\Delta t$, the approach is to integrate the system over an interval $[t_0, t_0 + \Delta t]$ by computing a truncated Dyson series. While this is generally expensive compared to calling a standard ODE solver, the key observation of \cite{shillito_fast_2020}, phrased in terms of the terminology of this paper, is that the multivariable Dyson series terms associated with a carefully chosen approximate decomposition of the generator are actually independent of both the envelopes and interval start time $t_0$ (up to a frame rotation). Hence, the multivariable expansion terms need only be computed once, and can be reused for arbitrary $t_0$ and envelopes $f_j(t)$.

We now explicitly outline the approach in our notation and terminology, describe a Magnus-based version, and discuss how this framing leads to a reduction in the number of terms required to compute and store. We deviate from \cite{shillito_fast_2020} and consider the generator in the frame of $F$: 
\begin{equation}
	\tilde{G}(t) = \sum_{j=1}^s \Re[f_j(t) e^{i \omega_j t}]\tilde{A}_j(t), \label{equation:numerical_generator}
\end{equation}
where $\tilde{A}_j(t) = e^{-t F} A_j e^{t F}$. Fixing the model details, the method is parameterized in terms of a step size $\Delta t$, and a linear approximation scheme for the envelopes\footnote{Note ``linear'' here refers to the $f_j(t)$ being written as a linear combination of basis functions, \emph{not} that the function is necessarily approximated as linear over each interval.}. I.e. for $t \in [t_0, t_0 + \Delta t]$, the envelopes are approximated as:
\begin{equation}
	f_j(t) \approx \sum_{m = 1}^{d_j} f_{j,m} T_m(t-t_0), \label{equation:env_approx}
\end{equation} 
where the $f_{j,m}$ are the linear approximation coefficients, and the $T_m$ are some chosen basis of functions. The presentation in this section leaves the choice of $T_m$ free, though we assume they are real-valued, and implicitly assume some method of computing the coefficients $f_{j,m}$.\footnote{As described in Section \ref{section:integrator_implementation}, our implementation takes the $T_m(t - t_0)$ to be the Chebyshev polynomials defined on the interval $[t_0, t_0 + \Delta t]$, and the $f_{j,m}$ are computed via Discrete Chebyshev Transformation.}

Using the envelope approximations in Equation \eqref{equation:env_approx} for $t \in [t_0, t_0 + \Delta t]$, the generator $\tilde{G}(t)$ is approximated as:
\begin{equation}
\begin{aligned}
	\tilde{G}(t) &\approx \sum_{j=1}^s \sum_{m=0}^{d_j} \Re\left[f_{j,m}e^{i \omega_j t_0}\right]\cos(\omega_j (t-t_0))T_m(t-t_0)\tilde{A}_j(t) \\
	&\qquad + \sum_{j=1}^s \sum_{m=0}^{d_j} \Im\left[f_{j,m}e^{i \omega_j t_0}\right]\sin(-\omega_j (t-t_0))T_m(t-t_0)\tilde{A}_j(t), \label{equation:approx_generator}
\end{aligned}
\end{equation}
where we have used that
\begin{equation}\label{equation:real_imag_decomposition}
\begin{aligned}
	\Re[f_{j,m}T_m(t-t_0)e^{i \omega_j t}] = \Re&\left[f_{j,m}e^{i \omega_j t_0}\right]\cos(\omega_j (t-t_0))T_m(t-t_0)\\ &+ \Im\left[f_{j,m}e^{i \omega_j t_0}\right]\sin(-\omega_j (t-t_0))T_m(t-t_0).
\end{aligned}
\end{equation}

The decomposition in Equation \eqref{equation:approx_generator} gives the generator as approximately equal to a linear combination of time-dependent terms, with the property that dependence of the generator on the envelope functions is entirely compartmentalized in the constant coefficients. It is with respect to this decomposition that we compute either the multivariable Dyson series or Magnus expansion over the interval $[t_0, t_0 + \Delta t]$: 
\begin{itemize}
	\item The coefficients of the expansion $c_{(0)}, \dots, c_{(r-1)}$ are the 
	\begin{equation}
		\Re\left[a_{j,m}e^{i \omega_j t_0}\right]\textnormal{ and }\Im\left[a_{j,m}e^{i \omega_j t_0}\right]
	\end{equation}
	in some prescribed order, and
	\item The corresponding time-dependent operators $G_{(0)}(t_0, t), \dots, G_{(r-1)}(t_0, t)$ are given by
	\begin{equation}
	\cos(\omega_j (t-t_0))T_m(t-t_0) A_j \label{equation:cosine_operator}
	\end{equation}
	and
\begin{equation}
\sin(-\omega_j (t-t_0))T_m(t-t_0) A_j \label{equation:sine_operator},
\end{equation}
where we have explicitly included the start time $t_0$ in the signature of the $G_j(t_0, t)$.
\end{itemize}
In the above, $r = 2\sum_{j=1}^s d_j$, corresponding to the number of terms appearing in the approximate generator in Equation \eqref{equation:approx_generator}.

Denote $\symD_I(t_0, t_0 + \Delta t)$ and $\symO_I(t_0, t_0 + \Delta t)$ as the multivariable Dyson and Magnus terms for a given index multiset $I$ for the above operators, in the frame $F$, with integration time $[t_0, t_0 + \Delta t]$. Critically, these operators satisfy the following time-translation identities:
\begin{equation}
	\symD_I(t_0, t_0 + \Delta t) = e^{-t_0F}\symD_I(0, \Delta t)e^{t_0 F},\label{equation:dyson_time_translation}
\end{equation}
and:
\begin{equation}
	\symO_I(t_0, t_0 + \Delta t) = e^{-t_0F}\symO_I(0, \Delta t)e^{t_0 F}. \label{equation:magnus_time_translation}
\end{equation}
That is, the power series terms over an interval of length $\Delta t$ can be translated to different start times via a frame transformation. This was shown in \cite{shillito_fast_2020} for the Dyson series, and we provide a proof for both the Dyson series and Magnus expansion for our particular setup in Appendix \ref{app:time_translation}.

Thus, after pre-computing the relevant collection of $\symD_I(0, \Delta t)$ or $\symO_I(0, \Delta t)$, the solution $U(t_0, t_0 + \Delta t)$ over the interval $[t_0, t_0 + \Delta t]$ for the generator $\tilde{G}(t)$ is approximated via the following steps:
\begin{itemize}
	\item Compute the series variables $c_i$ by computing the envelope approximation coefficients $f_{j,m}$, and computing $\Re\left[f_{j,m}e^{i \omega_j t_0}\right]$ and $\Im\left[f_{j,m}e^{i \omega_j t_0}\right]$ in suitable order.
	\item Evaluate the truncated series $\sum_I c_I \symD_I(0, \Delta t)$ or $\sum_I c_I \symO_I(0, \Delta t)$.
	\item In the case of the Magnus expansion, exponentiate the above results.
	\item Conjugate the results from the previous step by $e^{-t_0F} (\cdot) e^{t_0 F}$ to translate the truncated series to the right starting time $t_0$.
\end{itemize}
In Appendix \ref{app:numerical_integration_algorithm} we describe how to save on the frame translation steps when simulating over a contiguous series of intervals.

We end by noting the differences between our presentation and that of \cite{shillito_fast_2020} in the case of the Dyson series. First, consider the number of perturbation terms required to compute, store, and take linear combinations of. According to Section \ref{section:scaling}, utilizing an $n^{th}$ order in the series, given the approximate generator decomposition of Equation \eqref{equation:approx_generator}, requires ${n+r \choose n} - 1$, with $r = 2 \sum_{j=1}^s d_j$. Ref. \cite{shillito_fast_2020} states that when approximating to the $n^{th}$ order in the Dyson series, for a single term in the sum in Equation \eqref{equation:numerical_generator} (corresponding to $s=1$), and using only a single term in the envelope decomposition in Equation \eqref{equation:env_approx} (corresponding to $d_1 = 1$), the \code{Dysolve} algorithm requires computing $2^{n+1} - 1$ perturbation terms. However, in this case, we have $r=2$, and hence our version only requires
\begin{equation}
	{n+2 \choose n}  - 1 = \frac{(n+2)(n+1)}{2} - 1
\end{equation}
terms for expansion order $n$, which is polynomial rather than exponential in the order.

Second, in this paper we set up the problem in the frame of the constant operator $F$, then truncate the Dyson series at a given order, whereas \cite{shillito_fast_2020} expands the propagator in the ``lab frame''. In terms of the resulting operators, the two are related by left-multiplication by $e^{-\Delta t F}$, so there is no fundamental difference between the two choices.

\subsection{Implementation in Qiskit Dynamics} \label{section:integrator_implementation}

The classes \code{DysonSolver} and \code{MagnusSolver} implement the perturbative solvers described above. For the envelope approximation over each interval, we choose to use Chebyshev polynomials, with the approximation coefficients in Equation \eqref{equation:env_approx} computed via Discrete Chebyshev Transform. The matrix exponentials in \code{MagnusSolver} are computed using the \code{expm} function in JAX, which uses the method in \cite{al-mohy_new_2009}. These solvers have been implemented to evaluate the approximate solution over all sub-intervals simultaneously in a vectorized way, and the overall solution is then computed by multiplying these together.

The API and behaviour of these classes, as of version $0.3.0$, are described in Appendix \ref{app:perturbative_solver_api}. The main goal of this implementation is to enable easy configuration of the various solver parameters, to enable performance investigations as in the next section.

\subsection{Performance demonstration} \label{section:solver_demo}

The potential performance gains of these methods are demonstrated in \cite{shillito_fast_2020}, in which their implementation of the method is compared against the solver in QuTiP \cite{johansson_qutip_2013}. Here, we compare the performance of our implementation against the traditional ODE solvers available in Qiskit Dynamics.\footnote{These comparisons were performed using the Qiskit Dynamics \code{main} branch on commit \code{948809} (to be included in version 0.4.1), JAX version 0.4.8, and CUDA 12.}

We consider a model of two interacting transmons in the Duffing approximation described by Hamiltonian:
\begin{equation}
\begin{aligned}
	H(t) =&\; 2 \pi \nu_0 N_0 + \pi \alpha_0 N_0(N_0 - \id_0)  \\
		&+ 2 \pi \nu_1 N_1 + \pi \alpha_1 N_0(N_1 - \id_1) \\
		& + 2 \pi J (a_0 a_1^\dagger + a_0^\dagger a_1) \\
		& + s_0(t) \times 2 \pi (a_0 + a_0^\dagger) \\
		& + s_1(t) \times 2 \pi(a_1 + a_1^\dagger),
\end{aligned}
\end{equation}
where for transmon $j \in \{0, 1\}$:
\begin{itemize}
	\item $a_j$, $N_j$, and $\id_j$ are the raising, number, and identity operators,
	\item $\nu_j$ is the frequency and $\alpha_j$ is the anharmonicity,
	\item $J$ is the coupling strength, and
	\item $s_j(t)=Re[f_j(t)e^{i\omega_jt}]$ is the drive signal on the transmon.
\end{itemize}
The parameters are chosen to model qubit pair $[{\rm Q}3, {\rm Q}5]$ of \code{ibmq\_montreal} as reported on 04/19/2022:\\
\begin{itemize}
	\item $\nu_0 = 5.105$, $\nu_1 = 5.033$, $\alpha_0 = -0.33516$, $\alpha_1 = -0.33721$, and $J = 0.002$,
\end{itemize}
where Q3 is indexed by $0$, and Q5 by $1$.

For the drive signals $s_0(t)$ and $s_1(t)$, we choose the Direct CX pulse described in Fig. 7 of \cite{Jurcevic_2021}, with the control being qubit $0$ (modeling Montreal Q3), and the target being qubit $1$ (modeling Montreal Q5). The Direct CX is a maximally entangling $ZX$ rotation between the two qubits facilitated by three drive tones at the target qubit's transition frequency $\nu_1$. One drive tone is applied to the control qubit, and the other two pulses are symmetric and asymmetric pulses of variable amplitude and phase applied to the target qubit. The shapes are shown in Figure \ref{figure:signal_shapes}, and the explicit mathematical forms, as functions of the amplitude $A$, gate time $T$, risetime $r$ (the length of time over which the pulse goes from $0$ to max amplitude), and $\sigma$ (width of shape during risetime), are given in Appendix \ref{app:pulse_form}. In the simulations below, we treat the amplitudes and phases of the three pulses as variable parameters, while $T = 200$, $r=7$, and $\sigma = 7$, are fixed. All times and frequencies are in ns and GHz.

\def\scalehere{0.5}

\begin{figure}
\centering
\includegraphics[scale=\scalehere]{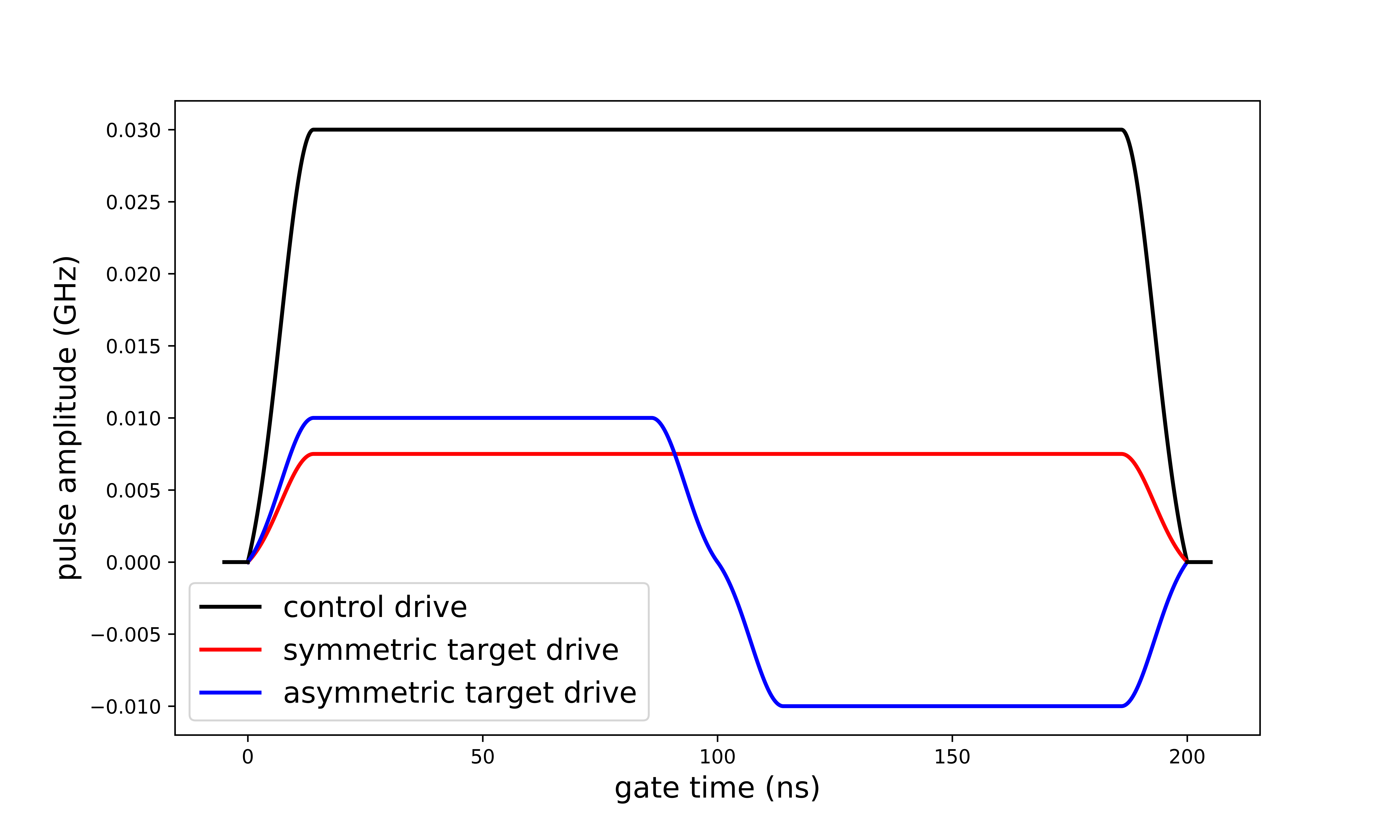} 
\caption{Direct CX qubit drive envelopes. The ``control drive'' envelope is the envelope of $s_0(t)$, applied to qubit $0$. Both the ``symmetric target drive'' and ``asymmetric target drive'' envelopes are added together to form the envelope of $s_1(t)$, applied to qubit $1$, which is the ``target'' for the CX gate. The carrier frequency of both signals are set to the frequency of qubit $1$.\label{figure:signal_shapes}}
\end{figure}

The \code{benchmarking} folder in the supplementary code repository \cite{supplemental_repo} contains scripts for comparing the Dyson and Magnus perturbative solvers against standard ODE solvers for this problem.\footnote{The exact data used in the plots presented in this section is stored in the \code{benchmarking/data} folder.} We consider both the time to generate the solution, as well as the time to compute the gradient of some real-valued function of the solution with respect to the pulse parameters. For setting up the perturbative solvers, we set the static part and frame operator to:
\begin{equation}
\begin{aligned}
	F = -i(&2 \pi \nu_0 N_0 + \pi \alpha_0 N_0(N_0 - \id_0)  \\
		&+ 2 \pi \nu_1 N_1 + \pi \alpha_1 N_0(N_1 - \id_1) \\
		& + 2 \pi J (a_0 a_1^\dagger + a_0^\dagger a_1)),
\end{aligned}
\end{equation}
and the control operators to:
\begin{equation}
	A_0 = -i 2 \pi (a_0 + a_0^\dagger)\textnormal{ and } A_1 = -i 2 \pi (a_1 + a_1^\dagger).
\end{equation}
Note that $A_0$ corresponds to the operator acting on the control qubit, and hence is modulated by $s_0(t)$ whose envelope is given by the control drive in Figure \ref{figure:signal_shapes}, and $A_1$ is the operator acting on the target qubit, which is modulated by $s_1(t)$ whose envelope is the sum of the target drives in Figure \ref{figure:signal_shapes}. For setting up standard ODE solvers, we use the \code{Solver} class in Qiskit Dynamics, specifying the Hamiltonian decomposition required by it, and also specifying the simulation to be in the rotating frame of the static part (we have found that this generally improves performance). The \code{odeint} ODE solver available in JAX \cite{jax2018github} is used (see footnote \ref{footnote:odeint} for a description of \code{odeint}). For all solvers, we use a truncation dimension of $5$ for each transmon, for a total system dimension of $25$.

To gain a picture of the trade-off between speed and accuracy of both \code{odeint} and the perturbative solvers, we choose many different configurations of each solver, then compare, for randomly chosen input control parameters, the total time to solve vs the average accuracy of the generated solutions relative to benchmark solutions.\footnote{The same random input parameters are used for all configurations, to ensure a fair comparison.} This approach is chosen as:
\begin{itemize}
	\item It allows for comparison of solvers with very different forms of configuration (e.g. tolerances vs fixed step size), and
	\item We do not know a priori what is the best way to choose all of the parameters of the perturbative solvers, and want to get an empirical sense of what speed vs accuracy ratios are possible.
\end{itemize}
Additionally, to fairly compare the perturbative solvers to \code{odeint}, it is necessary to account for the fact that the perturbative solvers naturally utilize parallelization.\footnote{The perturbative solvers consist of operations that can be naturally parallelized, whereas the ODE solvers are fundamentally serial computations. As such, comparisons running a single simulation at a time would be misrepresentative of performance; a single run of a traditional ODE solver may take more time due to the number of serial steps, but actually use far less of a device at any given time (either a GPU, or a multicore CPU). This is especially true for the system dimension considered here, which is small.} As such, on GPU we use the JAX \code{vmap} transformation to \emph{vectorize} calls to \code{odeint}, effectively parallelizing calls to this solver. As shown in Figure \ref{figure:gpu_parallelization_saturation} in Appendix \ref{app:speed_comparison}, the benefits of this parallelization for \code{odeint} on GPU are saturated around $7000$ inputs, so we choose this number of inputs for speed comparisons to the perturbative solvers. By contrast there appear to be no benefits to vectorizing our implementation of the perturbative solvers on GPU (which already consist of parallel operations), and hence for these solvers we run the $7000$ simulations using a serial loop. For CPU comparisons, which are shown in Appendix \ref{app:speed_comparison}, we use a single core for both \code{odeint} and the perturbative solvers, as we have found that the speed of both types of solvers scales sub-linearly with the number of cores utilized. The CPU comparisons are performed with only $100$ inputs, which is enough to average out performance variations and just-in-time compilation costs for a single core.

For measuring solution accuracy, we use the distance metric between solutions $U, V$:
\begin{equation}
	\norm{U - V}_2 / \sqrt{d},
\end{equation}
where $d$ is the dimension of the whole space. This metric is chosen as it is used for the error estimation in the \code{odeint} solver. Using this metric, the accuracy of a given solution at the final time is measured via the distance to a benchmark solution, computed using \code{odeint} with tolerances set to \code{1e-14} (using 64-bit precision).

In terms of configurations, for \code{odeint}, we consider various choices of tolerances, setting both \code{rtol} and \code{atol} to all powers of $10$ within the range of \code{1e-6} to \code{1e-14}. For both the Dyson and Magnus perturbative solvers, we consider all possible combinations of the following parameter choices:
\begin{itemize}
	\item The Chebyshev approximation order of each signal, either $0$, $1$, or $2$,
	\item The order of the expansion, from $2$ to $5$.
	\item The number of time steps used to simulate over $T=200$. We divide the interval into $M=10^4$, $2 \times 10^4$, $3 \times 10^4$, $4 \times 10^4$, or $5 \times 10^4$, and set $dt = T / M$.  
\end{itemize}

Figure \ref{figure:solver_data_gpu} shows the speed v.s. accuracy trade-off for the best performing solver configurations when run on the Nvidia A100 GPU with 80GB of memory. The Dyson and Magnus-based solvers outperform the traditional ODE solver to various degrees depending on the accuracy level. As shown in Figure \ref{figure:speedup_gpu}, the speedups from using the perturbative solvers range from $2 \times$ to $4 \times$, and for gradient computations, from about $10 \times$ to $60 \times$. The full data, showing performance on GPU for all tested solver configurations, is shown in Figure \ref{figure:solver_data_gpu_full} in Appendix \ref{app:speed_comparison}.

\begin{figure}[h!]
\centering
\hspace*{-1.25cm}
\includegraphics[scale=.4]{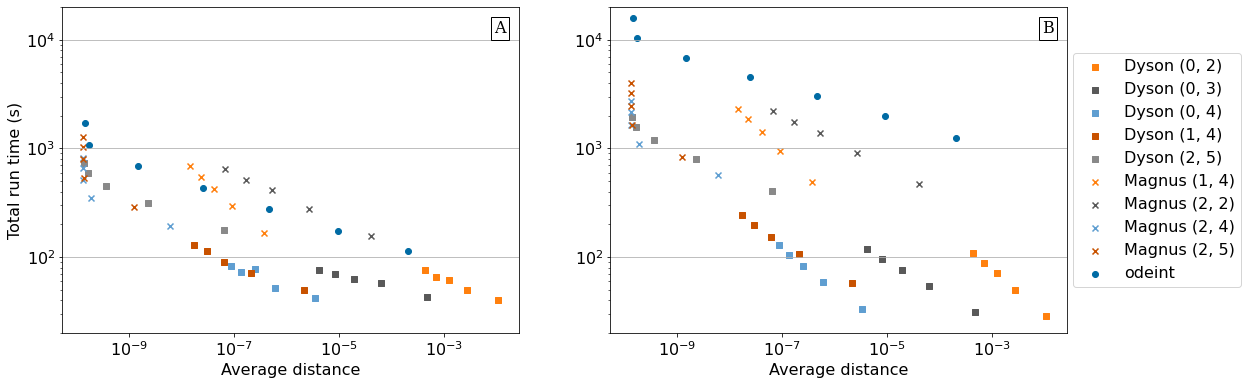}
\caption{Total Runtime v.s. Average Distance for (A) computing the final unitary for 7000 randomized control parameters and (B) computing the gradient of the fidelity of the final unitary with respect to the same pulse parameters, computed on an Nvidia A100 GPU with 80GB of memory. For both Dyson and Magnus solvers, the label $(m, n)$ denotes the configuration parameters used: a $m$-order Chebyshev approximation of the signals, and an $n^{th}$ order perturbative expansion. For a given solution $U$, distance is computed via the metric $\norm{U - V}_2 / \sqrt{d}$, where $V$ is a benchmark solution computed using \code{odeint} at absolute and relative tolerances \code{1e-14}. Average distance is the arithmetic mean of these values. For each configuration, the data points correspond to various numbers of time-steps $M=10^4$, $2 \times 10^4$, $3 \times 10^4$, $4 \times 10^4$, or $5 \times 10^4$. This plot only contains select data of the best performing perturbative solver configurations. The \code{odeint} solutions were computed by vectorizing the computation over all inputs at once, and the perturbative solvers computed all solutions in a serial loop. Full data for both CPU and GPU-based simulations can be found in Appendix \ref{app:speed_comparison}. Note the peculiar feature that some points in plot (B) lie below the corresponding points in plot (A), indicating that the gradient computation is actually faster in some instances. While the fidelity computation of plot (B) seemingly involves more computation than the unitary computation in plot (A), we believe this is due to optimizations in the JAX compilation taking advantage of the fidelity only depending on a sub-block of the full unitary. \label{figure:solver_data_gpu}}
\end{figure}

\begin{figure}[h!]
\centering
\hspace*{-1.25cm}
\includegraphics[scale=.45]{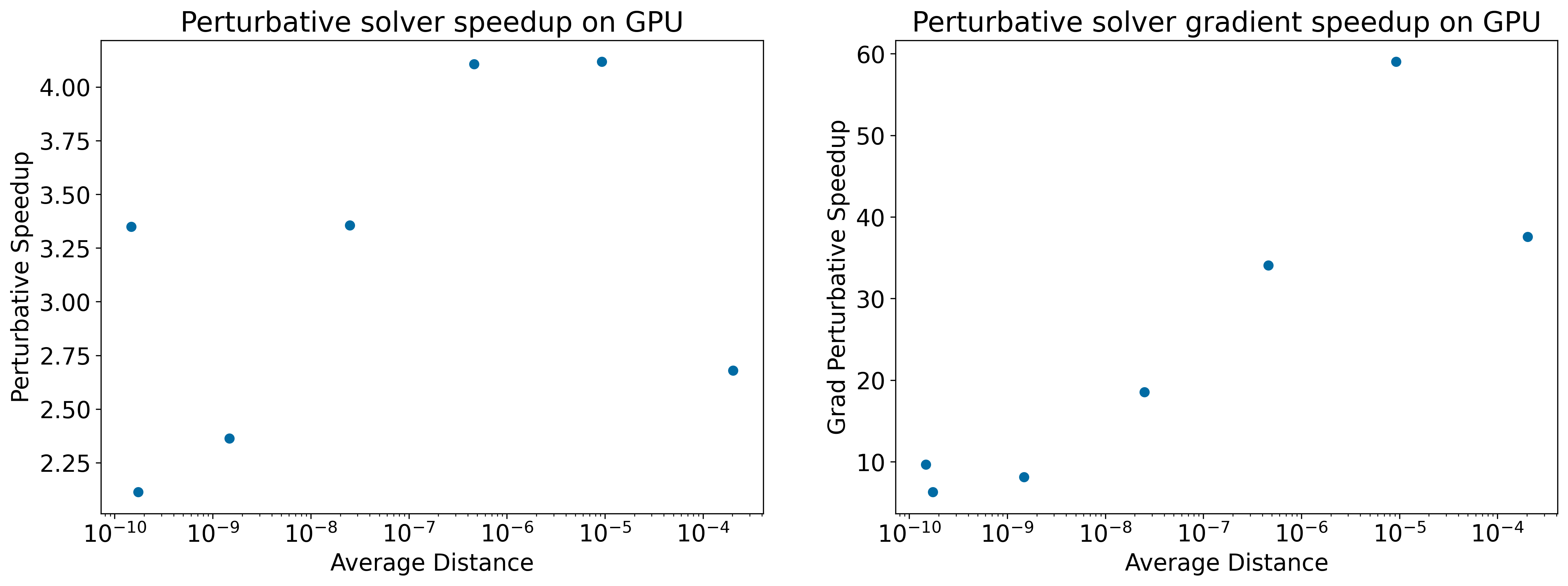}
\caption{Speedup vs average distance for data in Figure \ref{figure:solver_data_gpu}. For each ODE solver point, the speedup is computing the solution time ratio of the fastest perturbative solver (drawing from both Dyson and Magnus solvers) achieving an average distance no worse than the ODE solver point. Plot (A) shows the speedup for computing the final unitary $U$, in which we observe a speedup of roughly $2\times$ at very low average distances, and up to $4\times$ for higher average distances. Plot (B) demonstrates the corresponding data for computing solution gradients. For gradients, the speedup is more dramatic, potentially owing to the perturbative solvers being an ``easier'' computation to automatically differentiate as compared to traditional ODE solvers.}
\label{figure:speedup_gpu}
\end{figure}

The same speed comparisons are shown for CPU in Figures \ref{figure:solver_data_cpu_full} and \ref{figure:solver_data_cpu_grad_full} in Appendix \ref{app:speed_comparison}. For the CPU case, we conclude that this implementation of the perturbative solvers is unlikely to be practically useful. At best, modest speedups are observed, and due to the memory requirements of the perturbative solvers, they will not easily scale up to be run in parallel on many CPU cores. See Appendix \ref{app:speed_comparison} for details. As the speedups for these solvers are likely very problem-dependent, more real-world usage is necessary to draw more conclusions. They may get better or worse as the dimension of the system scales, and will certainly be limited by the number of time-dependent signals appearing in the generator decomposition in Equation \eqref{equation:numerical_generator}.

Lastly, the full data plots for GPU in Appendix \ref{app:speed_comparison} show that the Magnus solver tends to be slower than the Dyson solver, however it becomes comparable, and even slightly faster than the Dyson solver for the highest accuracy solutions. Furthermore, the Magnus solver generally seems to produce higher accuracy solutions: on average, the Magnus solver points appear to have lower average distance. We suspect that these observations are the result of a trade-off with the Magnus-based approach: it requires a matrix exponential at every step, which is costly, but requires fewer expansion terms to achieve the same level of accuracy as the Dyson-based approach. This latter point is evidenced by Figure \ref{figure:terms_vs_accuracy}, which plots the average distance attained by the Dyson and Magnus solvers vs the number of expansion terms used in a given configuration. Generally, for the same number of terms and time-steps, the Magnus solver achieves better accuracy, and hence may be of more use in memory-limited scenarios.

\begin{figure}
\centering
\hspace*{-1.125cm}
\includegraphics[scale=.55]{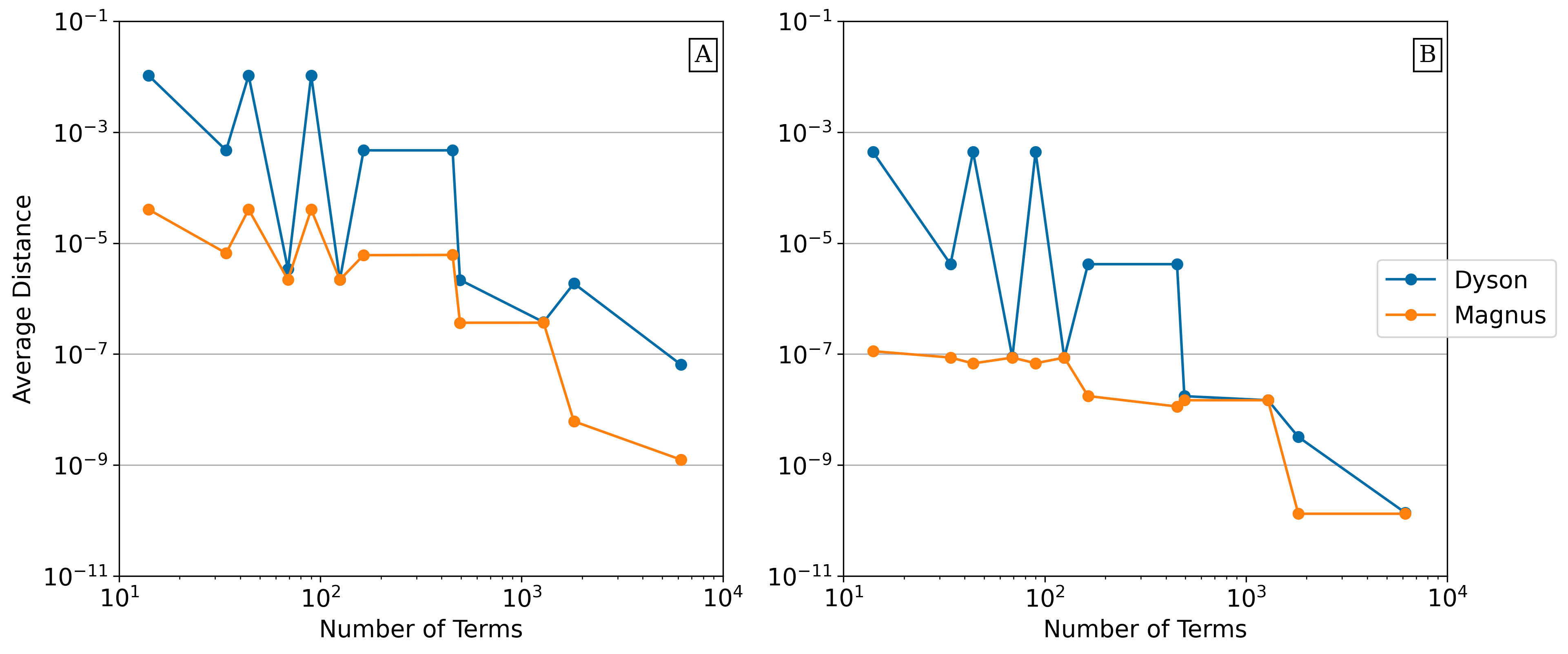}
\caption{The average distance achieved vs the number of terms used in the expansion for both Dyson and Magnus for (A) $M=10^4$ time-steps and (B) $M=5 \times 10^4$ time-steps. The number of terms is determined as follows: Section \ref{section:scaling} shows that, for $r$ variables and expansion order $n$, there are ${r + n \choose n} -1$ terms to compute and store in the expansion. In this case, the number of variables is $r = 2$(number of signals)$\times 2$(real/imaginary parts)$\times$\code{chebyshev\_order}. For a given number of terms, the Magnus solver always achieves a lower average distance per number of terms than the Dyson solver, in some cases significantly so. This data supports the notion that the Magnus solvers may be of more use in memory-limited scenarios, in which fewer terms can be stored and utilized. \label{figure:terms_vs_accuracy}}
\end{figure}

\subsection{Memory requirements and potential improvements} \label{section:perturbative_solver_memory}

Lastly, we end this section with a discussion on memory requirements of the perturbative solvers. To simplify the discussion, assume that all signal envelopes are approximately decomposed as a linear combination of $d_c$ functions in Equation \eqref{equation:env_approx}. As described in Section \ref{section:perturbative_solvers}, the number of terms appearing in the approximate generator at each time step are $r = 2 s d_c$, where $s$ is the number of signals. Hence, for a Dyson or Magnus expansion of order $n$, the pre-computation step for the perturbative solvers involves solving a differential equation whose solution is an array of
\begin{equation}
	{n+2 sd_c \choose n} - 1 \label{equation:combinatorial_expression}
\end{equation}
complex matrices of dimension $d$. For the example in the preceding section, where the matrix dimension is $25$ and we use $64$ bit floating point numbers, each matrix requires $10$ kilobytes of memory. For the case $d_c = 3$ (corresponds to Chebyshev order $2$) and $n = 5$, Equation \eqref{equation:combinatorial_expression} evaluates to $6187$, and therefore the storage of the perturbative expansion terms requires about $59$ megabytes of memory. The pre-computation step itself therefore should require some small constant multiple of $59$ megabytes of memory, depending on the number of copies of the state the ODE solver stores at any given time. For these parameters, we observe roughly $700$ megabytes of memory usage on a personal laptop during the pre-computation step, which is about $10$ to $12\times$ the memory required to simply store the matrices. 

After the pre-computation step is completed, to solve over $M$ time steps, we have implemented the solvers to evaluate the approximate solution over each time step simultaneously using vectorization. The solution at the final time is then computed by multiplying these together using the \code{associative\_scan} looping construct in JAX, which utilizes vectorized operations to simultaneously carry out independent computations. For this step it is difficult to exactly estimate the memory requirements, as the underlying compilation can merge operations in non-trivial ways, and would require a detailed understanding of the algorithm underlying the \code{associative\_scan} function. For example, for $M=40000$, storing the $M$ matrices (with $d=25$) requires about $381$ megabytes of memory. However, when running the Dyson solver for $M=40000$ and the parameters in the preceding paragraph, it uses about $10$ gigabytes of memory. 

As we have observed, even with the heavy memory usage of this implementation, the solvers provide an advantage over traditional solvers on GPU. Tradeoffs between memory usage and speed of the solvers could be further explored by changing how parallelization is utilized. For example, the memory usage could be decreased using a serial loop to construct and propagate the solution over each time step, rather than using \code{associative\_scan}. While this would slow down an individual instance of the solver, it would become more feasible to run multiple instances in parallel. Alternatively, a custom version of \code{associative\_scan} could be written to only compute the solution at the final time (or a list of desired times), eliminating some redundancy in its output (the definition of the problem \code{associative\_scan} solves, for our problem, amounts to explicitly computing the solution at $t_0 + k \Delta t$ for all $k$ up to $M$).

Lastly, the heavy memory requirements of these solvers impact the system size they can feasibly be applied to. Even at dimension $25$ they use significant memory, and as such, applying them to larger systems will require tweaking the methods to reduce memory footprint. Whether or not they provide an advantage in larger systems is an open question warranting further investigation.
\section{Discussion} \label{section:discussion}

We have developed algorithms and software tools for numerically working with the Dyson series and Magnus expansion in a multivariable setting, with the goal of facilitating numerical research applications in quantum control and device engineering. We have demonstrated the tools in the context of robust control, and built higher-level numerical tools in the form of solvers based on the Dyson series and Magnus expansion, demonstrating speed ups on GPU in a simulation of a two transmon gate.

In terms of the Dyson and Magnus-based solvers, given their speed on GPU, it would be helpful to develop more automated ways of choosing the parameters, or to devise a variable-step version of them. Given the phrasing of the pre-computation step in this paper as an ODE, the step size could be dynamically adjusted by solving the ODE from the current step size to a new one. The comparison of low order vs higher order formula typically utilized in variable-step solvers could be done between different expansion orders and signal envelope approximation orders.

Finally, an important observation is that the ODE derived to compute Dyson terms in Algorithm \ref{algorithm:dyson_algorithm} for linear matrix differential equations \emph{is itself a linear matrix differential equation}. That is, the right hand side is a linear function of the state, and as it only consists left-multiplication of the state, it can be rewritten as a large matrix multiplying the state (e.g. as is done in \cite{haas_engineering_2019}). As such, the Dyson and Magnus-solvers could be applied to this ODE, which could speed up the computations performed in the robust control demo when executed on GPU.

\subsection*{Acknowledgements}

We thank Holger Haas and Haggai Landa for helpful discussions and feedback. We are also grateful to the reviewers for improving the clarity of the paper with their careful reading and helpful comments.

\appendix
\section{Proofs} \label{appendix:derivations}

\subsection{Multivariable Dyson term recursion relation} \label{appendix:derivations_dyson}

\begin{proof}[Proof of Proposition \ref{proposition:recursive_derivative}]
For an index multiset $I$ with $|I| \geq 2$, we can relate $\symD_I(t)$ recursively to lower order terms utilizing the explicit expression given in Proposition \ref{proposition:explicit_dyson}:
\begin{equation}
\begin{aligned}
	\symD_I(t) &= \sum_{m=1}^{|I|} \sum_{(I_1, \dots, I_m) \in P_m(I)} \int_0^t dt_1 \dots \int_0^{t_{m-1}}dt_m \tilde{G}_{I_1}(t_1) \dots \tilde{G}_{I_m}(t_m) \\
	&= \int_0^t dt_1 \tilde{G}_I(t) + \sum_{m=2}^{|I|} \sum_{(I_1, \dots, I_m) \in P_m(I)} \int_0^t dt_1 \dots \int_0^{t_{m-1}}dt_m \tilde{G}_{I_1}(t_1) \dots \tilde{G}_{I_m}(t_m).
\end{aligned}
\end{equation}
We can reorganize the sum: sum first over $J \subsetneq I$ for the first partition in the integral, and then sum over all ordered partitions of $I \setminus J$: 
\begin{equation}
\begin{aligned}
	\symD_I(t) &= \int_0^t dt_1 \tilde{G}_I(t)\\ &\quad + \sum_{J \subsetneq I} \sum_{m=1}^{|I \setminus J|} \sum_{(I_1, \dots, I_m) \in P_m(I \setminus J)} \int_0^t ds \tilde{G}_J(s) \int_0^s dt_1 \dots \int_0^{t_{m-1}}dt_m \tilde{G}_{I_1}(t_1) \dots \tilde{G}_{I_m}(t_m) \\
	&= \int_0^t dt_1 \tilde{G}_I(t) + \sum_{J \subsetneq I} \int_0^t dt_1 \tilde{G}_J(t_1) \symD_{I \setminus J}(t_1).
\end{aligned}
\end{equation}
Left-multiplying the above equation by $V(t)$ on both sides and differentiating yields Equation \eqref{equation:dyson_derivative_rule}.
\end{proof}

\subsection{Multivariable Magnus terms recursion relation} \label{appendix:derivations_magnus}

\begin{proof}[Proof of Proposition \ref{proposition:recursive_magnus}]
Here we prove Equations \eqref{equation:sym_magnus_recursion} and \eqref{equation:sym_q_recursion} by walking through the multivariable analogue of the derivations in \cite{burum_magnus_1981, salzman_alternative_1985}. Starting from the definitions of the multivariable Dyson series and Magnus expansion, we have:
\begin{equation}
	\id + \sum_{k=1}^\infty \sum_{I \in \symI_k(r)} c_I \symD_I(t) = \exp\left(\sum_{k=1}^\infty \sum_{I \in \symI_k(r)} c_I \symO_I(t)\right).
\end{equation}
Expanding the right hand side using the Taylor series for the exponential gives:
\begin{equation}
	\id + \sum_{k=1}^\infty \sum_{I \in \symI_k(r)} c_I \symD_I(t) = \id + \sum_{m=1}^\infty \frac{1}{m!}\left( \sum_{k=1}^\infty \sum_{I \in \symI_k(r)} c_I \symO_I(t)\right)^m.
\end{equation}
By expanding out the power, and recollecting terms in the coefficients $c_I$ we may observe that
\begin{equation}
	\left( \sum_{k=1}^\infty \sum_{I \in \symI_k(r)} c_I \symO_I(t)\right)^m = \sum_{k=m}^\infty \sum_{I \in \symI_k(r)} c_I \sum_{(I_1, \dots, I_m) \in P_m(I)} \symO_{I_1}(t) \dots \symO_{I_m}(t),
\end{equation}
where $k$ starts at $m$ as the $m^{th}$ power only contains elements $c_I$ for which $|I| \geq m$.
This leads to the power series equality
\begin{equation}
	\sum_{k=1}^\infty \sum_{I \in \symI_k(r)} c_I \symD_I(t) = \sum_{m=1}^\infty \frac{1}{m!}\sum_{k=m}^\infty \sum_{I \in \symI_k(r)} c_I \sum_{(I_1, \dots, I_m) \in P_m(I)} \symO_{I_1}(t) \dots \symO_{I_m}(t).
\end{equation}
Swapping the order of summation for $k$ and $m$ on the right hand side yields:
\begin{equation}
	\sum_{k=1}^\infty \sum_{I \in \symI_k(r)} c_I \symD_I(t) = \sum_{k=1}^\infty \sum_{I \in \symI_k(r)} c_I\left(\symO_I(t) + \sum_{m=2}^{|I|} \frac{1}{m!}\sum_{(I_1, \dots, I_m) \in P_m(I)} \symO_{I_1}(t) \dots \symO_{I_m}(t)\right).
\end{equation}
Equating terms with the same coefficients leads to the recursion relation:
\begin{equation}
	\symO_I(t) = \symD_I(t) - \sum_{m=2}^{|I|} \frac{1}{m!} \sum_{(I_1, \dots, I_m) \in P_m(I)} \symO_{I_1}(t) \dots \symO_{I_m}(t). \label{equation:sym_magnus_recursion_app}
\end{equation}

Defining $\symQ_I^{(m)}$ as in Equation \eqref{equation:sym_q_def}:
\begin{equation}
	\symQ_I^{(m)} = \sum_{(I_1, \dots, I_m) \in P_m(I)} \symO_{I_1} \dots \symO_{I_m}
\end{equation}
and subbing into Equation \eqref{equation:sym_magnus_recursion_app} yields Equation \eqref{equation:sym_magnus_recursion}.

To derive the recursion relation in Equation \eqref{equation:sym_q_recursion}, we use the same trick as in the Dyson case: break the sum over the first multiset $I_1$ and a sum over the remaining members of the ordered partition:
\begin{equation}
\begin{aligned}
	\symQ_I^{(m)} &= \sum_{J \subset I, |J| \leq |I| - (m-1)} \symO_J \sum_{(I_1, \dots, I_{m-1}) \in P_{m-1}(I\setminus J)} \symO_{I_1} \dots \symO_{I_{m-1}} \\ 
	     & = \sum_{J \subset I, |J| \leq |I| - (m-1)} \symO_J \symQ_{I\setminus J}^{(m-1)} \\
	     &=  \sum_{J \subset I, |J| \leq |I| - (m-1)} \symQ_J^{(1)} \symQ_{I\setminus J}^{(m-1)}.
\end{aligned}
\end{equation}
\end{proof}

\subsection{Algorithmic scaling bounds} \label{appendix:scaling_bounds}

\begin{proof}[Proof of Fact \ref{fact:term_bounds}]
It holds that
\begin{equation}
	{r + n \choose n} -1 = \sum_{m=1}^n\sum_{1 \leq i_1 \leq \dots \leq i_m \leq r} 1 \leq \sum_{m=1}^n \sum_{1 \leq i_1, \dots, i_m \leq r} 1 = 1 + r + \dots + r^n \leq 1 + nr^n,
\end{equation}
where in the inequality we have removed the ordering requirements on $i_1, \dots, i_m$. This establishes the required upper bound $nr^n$, and the upper bound $rn^r$ follows from ${r + n \choose n}$ being symmetric under the exchange $r \leftrightarrow n$.
\end{proof}

\begin{proof}[Proof of Fact \ref{fact:algorithm_scaling}]
For computing the RHS for all terms in Equation \eqref{equation:dyson_derivative_rule}, observe that computing the derivative of a single $\symE_I(t)$ requires computing as many products as there are submultisets of $I$. The number of submultisets is trivially bounded by the total number of terms being computed. Summing this bound over all terms being computed results in the computation of the RHS requiring the square of the number of terms of matrix multiplications and matrix additions, yielding the desired bound.

The recursive algorithm for computing Magnus terms from Dyson terms consists of computing all of the $\symQ_I^{(m)}$ matrices. For a fixed index multiset $I$, there are $|I|$ values of $m$. The worst case cost of computing a given $\symQ_I^{(m)}$ occurs when $m > 1$. Computing a single such term involves performing as many matrix multiplications and additions as there are submultisets $J \subset I$ satisfying $|J| \leq |I| - (m-1)$. For a coarse bound, we can simply bound the number of such sets with the total number of terms being computed, ${r + n \choose n} - 1$. Hence, computing a single term $\symQ_I^{(m)}$ requires
\begin{equation}
	O\left({r + n \choose n} - 1\right)
\end{equation}
operations. Finally, note that there are ${r + n \choose n} - 1$ values of $I$, and for each $I$, $|I|$ values of $m$. We may coarsely bound $|I| \leq n \leq {r + n \choose n} - 1$, and hence the cost of performing the recursion relation for \emph{all} terms requires 
\begin{equation}
	O\left(\left[{r + n \choose n} -1\right]^3\right)
\end{equation}
operations.
\end{proof}
\section{Robustness objective computation time} \label{app:robustness}

The robustness objective computation time of Section \ref{section:robustness} is given in Figure \ref{figure:magnus_objective_times} for CPU, and Figure \ref{figure:magnus_objective_times_gpu} for GPU.

\begin{figure}[h!]
\begin{tabular}{cc}
\includegraphics[scale=\scalehere]{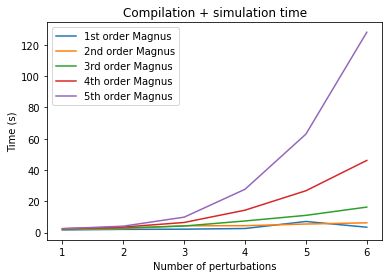} & \includegraphics[scale=\scalehere]{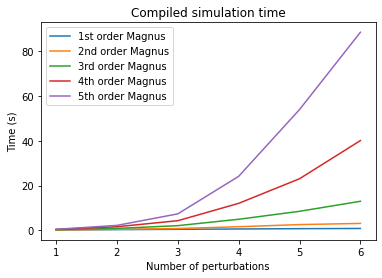}\\ 
\includegraphics[scale=\scalehere]{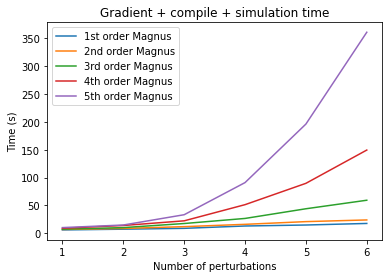} & \includegraphics[scale=\scalehere]{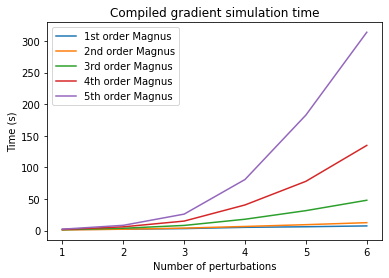}
\end{tabular}
\caption{CPU timing plots for computing the robustness $g(b)$ in Section \ref{section:robustness}. The top left plot shows the cost of computing $g(b)$, including the JAX compilation time, and the top right shows the cost once compiled. Similarly, the bottom left plot shows the cost of computing the gradient of $g$, including compilation, and the bottom right shows the cost of the gradient once compiled. All plots are generated on the CPU of a personal laptop.}
\label{figure:magnus_objective_times}
\end{figure}

\def\scaleheregpu{0.25}

\begin{figure}[h!]
\begin{tabular}{cc}
\includegraphics[scale=\scaleheregpu]{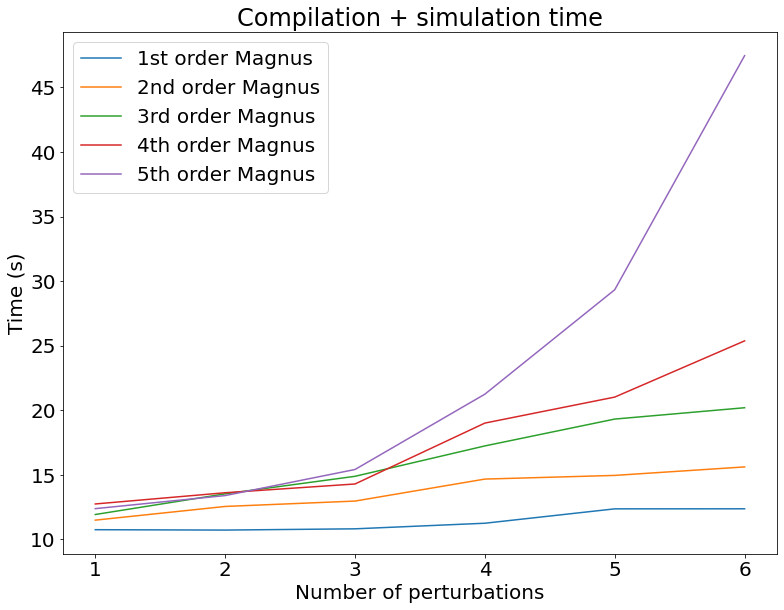} & \includegraphics[scale=\scaleheregpu]{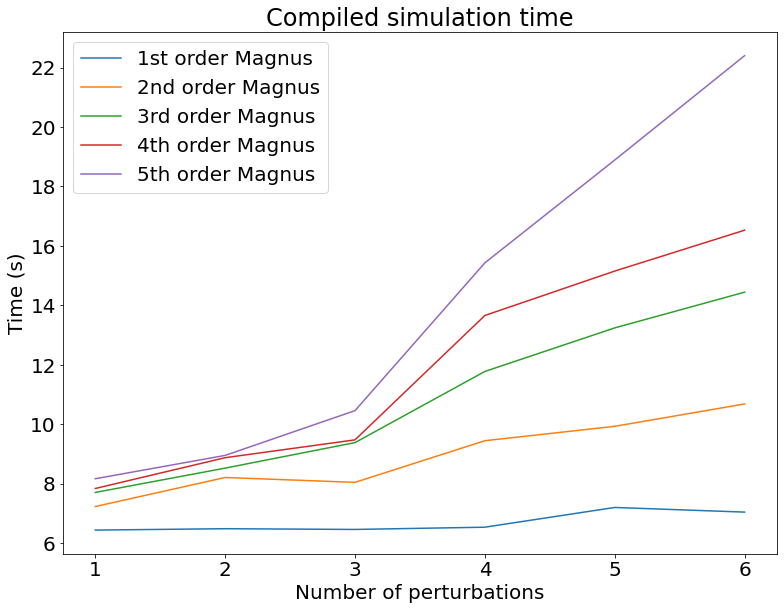}\\
\includegraphics[scale=\scaleheregpu]{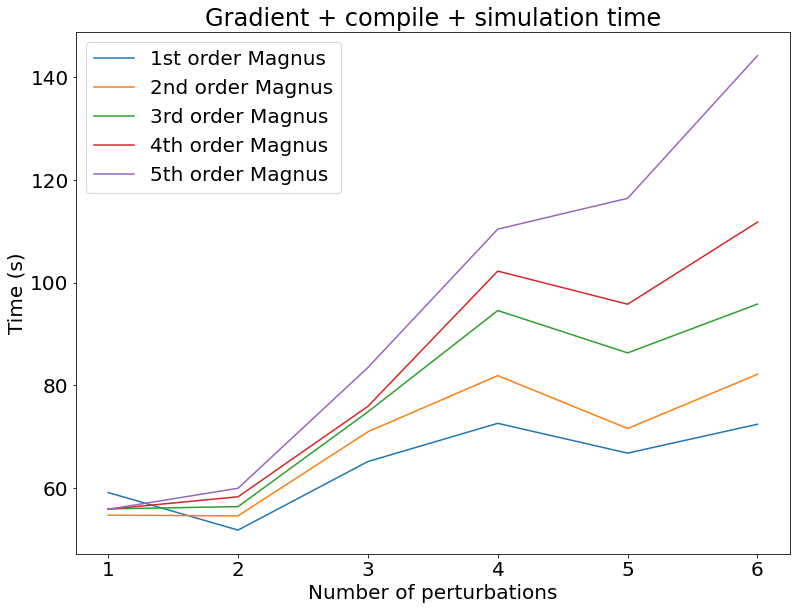} & \includegraphics[scale=\scaleheregpu]{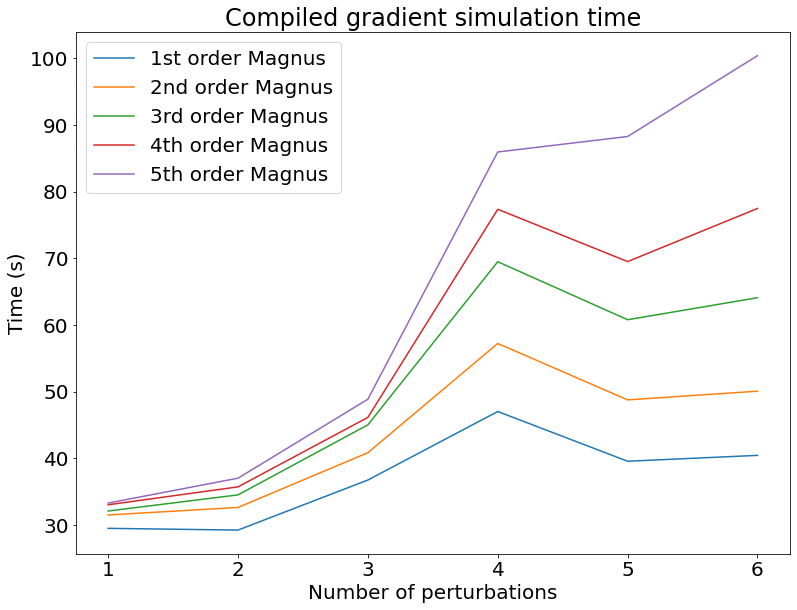}
\end{tabular}
\caption{The GPU equivalent of Figure \ref{figure:magnus_objective_times}. Computations were done using an Nvidia A100 GPU with 80GB of memory.}
\label{figure:magnus_objective_times_gpu}
\end{figure}

\clearpage
\section{Numerical integrators for systems with fast carrier frequencies}
\label{app:numerical_integrators}

\subsection{Time translation formula derivation} \label{app:time_translation}

Starting from Equation \eqref{equation:approx_generator}, we can rewrite:
\begin{equation}
	\tilde{A}_j(t) = e^{-tF}A_je^{tF} = e^{-t_0F}\left(e^{-(t-t_0)F}A_je^{(t-t_0)F}\right) e^{t_0F} = e^{-t_0F}\tilde{A}_j(t-t_0) e^{t_0F}.
\end{equation}
Subbing this into Equations \eqref{equation:cosine_operator} and \eqref{equation:sine_operator}, we find that the $G_I(t_0, t)$ operators satisfy
\begin{equation}
	\tilde{G}(t_0, t) = e^{-tF}G_I(t_0, t)e^{tF} = e^{-t_0F}\tilde{G}_I(0, t-t_0)e^{t_0F}. \label{equation:generator_time_translation}
\end{equation}
Finally, note that the $\symD_I(t_0, t_0 + \Delta t)$ and $\symO_I(t_0, t_0 + \Delta t)$ are both linear combinations of integrals of the form (or, of products of integrals of the following form):
\begin{equation}
	\int_{t_0}^{t_0 + \Delta t} dt_1  \dots \int_{t_0}^{t_{k-1}} dt_k \tilde{G}(t_0, t_{\sigma_1}) \dots \tilde{G}(t_0, t_{\sigma_k})
\end{equation} 
for some permutation $\sigma$. Equation \eqref{equation:generator_time_translation} gives that the above integral is equal to 
\begin{equation}
	e^{-t_0F}\left(\int_{t_0}^{t_0 + \Delta t} dt_1  \dots \int_{t_0}^{t_{k-1}} dt_k \tilde{G}(0, t_{\sigma_1} - t_0) \dots \tilde{G}(0, t_{\sigma_k} - t_0)\right)e^{t_0F},
\end{equation} 
and performing a change of variables $s_j = t_j - t_0$ for all integration variables yields
\begin{equation}
	e^{-t_0F}\left(\int_{0}^{\Delta t} ds_1  \dots \int_{0}^{s_{k-1}} ds_k \tilde{G}(0, s_{\sigma_1}) \dots \tilde{G}(0, s_{\sigma_k})\right)e^{t_0F}.
\end{equation} 
Equations \eqref{equation:dyson_time_translation} and \eqref{equation:magnus_time_translation} immediately follow.

\subsection{Simulation frame simplification} \label{app:numerical_integration_algorithm}

When simulating over a contiguous series of intervals with endpoints $t_k = t_0 + k \Delta t$, the frame handling steps can be simplified. Let $W_k$ denote the approximation to the propagator at the $k^{th}$ interval, \emph{before} performing the final frame translation step. Over $M$ contiguous intervals starting at $t_0$, the approximate evolution in the frame $F$ is then given as:
\begin{equation}
	e^{-t_{M-1}F}W_{M-1}e^{t_{M-1}F} \dots e^{-t_0F}W_0e^{t_0F}.
\end{equation}
Each intermediate frame product is of the form: $e^{t_{k+1}F}e^{-t_kF} = e^{\Delta t F}$. Hence, rather than computing $e^{t_k F}$ for every $t_k$, we only need to compute 3 exponentials $e^{-t_{M}F}$, $e^{t_0F}$, and $e^{\Delta t F}$, and each time-step requires only one extra multiplication as opposed to two. 

In the case of the Magnus expansion no further simplifications are possible, but for the Dyson series, the pre-factor $e^{\Delta t F}$ can be included directly in the decomposition as part of the pre-computation step. This corresponds to using the $\symE_I(t)$ version of the symmetric Dyson operators, and fully eliminates the additional multiplications for frame handling at each step.

\subsection{Functional form of Direct CX pulses} \label{app:pulse_form}

Denoting $A$ the pulse amplitude, $T$ the total gate time, $r$ the risetime, and $\sigma$ the width of the shape during the risetime, the envelope functions for the Direct CX gate are as follows. First, denote the functions:
\begin{equation}
	C(r) = e^{-\frac{r^2}{2}}
\end{equation}
\begin{equation}
	D(r, \sigma) = \sigma\left(\sqrt{2 \pi} \erf\left(\frac{r}{\sqrt{2}}\right) - 2r C(r)\right).
\end{equation}
The symmetric target and control drive envelopes are parameterized by
\begin{equation}
	f(t, A, \sigma, r, T) = \begin{cases} 
      A\left(e^{-\frac{1}{2}\left(\frac{t - r \sigma}{\sigma}\right)^2} - C(r)\right)/D(r, \sigma)   & t < r\sigma \\
      A(1-C(r)) / D(r, \sigma) & r\sigma \leq t \leq T - r\sigma \\
      A\left(e^{-\frac{1}{2}\left(\frac{(T-t) - r \sigma}{\sigma}\right)^2} - C(r)\right)/D(r, \sigma) & T - r\sigma \leq t
   \end{cases},
\end{equation}
and the anti-symmetric target drive is parameterized by
\begin{equation}
	g(t, A, \sigma, r, T) =  \begin{cases} 
      f(t, A, \sigma, r, T/2)   & t < T/2 \\
      -f(t - T/2, A, \sigma, r, T/2) & T/2 \leq t
      \end{cases}.
\end{equation}

\subsection{Speed comparison supplement} \label{app:speed_comparison}

The plot showing the benefits of parallelization of \code{odeint} on GPU via vectorization is shown in Figure \ref{figure:gpu_parallelization_saturation}. The ratio of the total simulation time over the number of simulations, when all simulations are run simultaneously via vectorization, is plotted as a function of the number of simulations. The curve decreases steadily until it becomes nearly flat over $6000$ simulations. This indicates that the time to simulate scales sub-linearly until about $6000$ simulations, and then scales roughly linearly beyond this point. We interpret this as a sign that parallelization on GPU using \code{odeint} for our problem is beneficial until $6000$, and beyond this there is no additional speed gain provided by vectorization. This motivates the choice of $7000$ inputs to use for speed comparisons on GPU.

\begin{figure}[h!]
\centering
\includegraphics[scale=0.6]{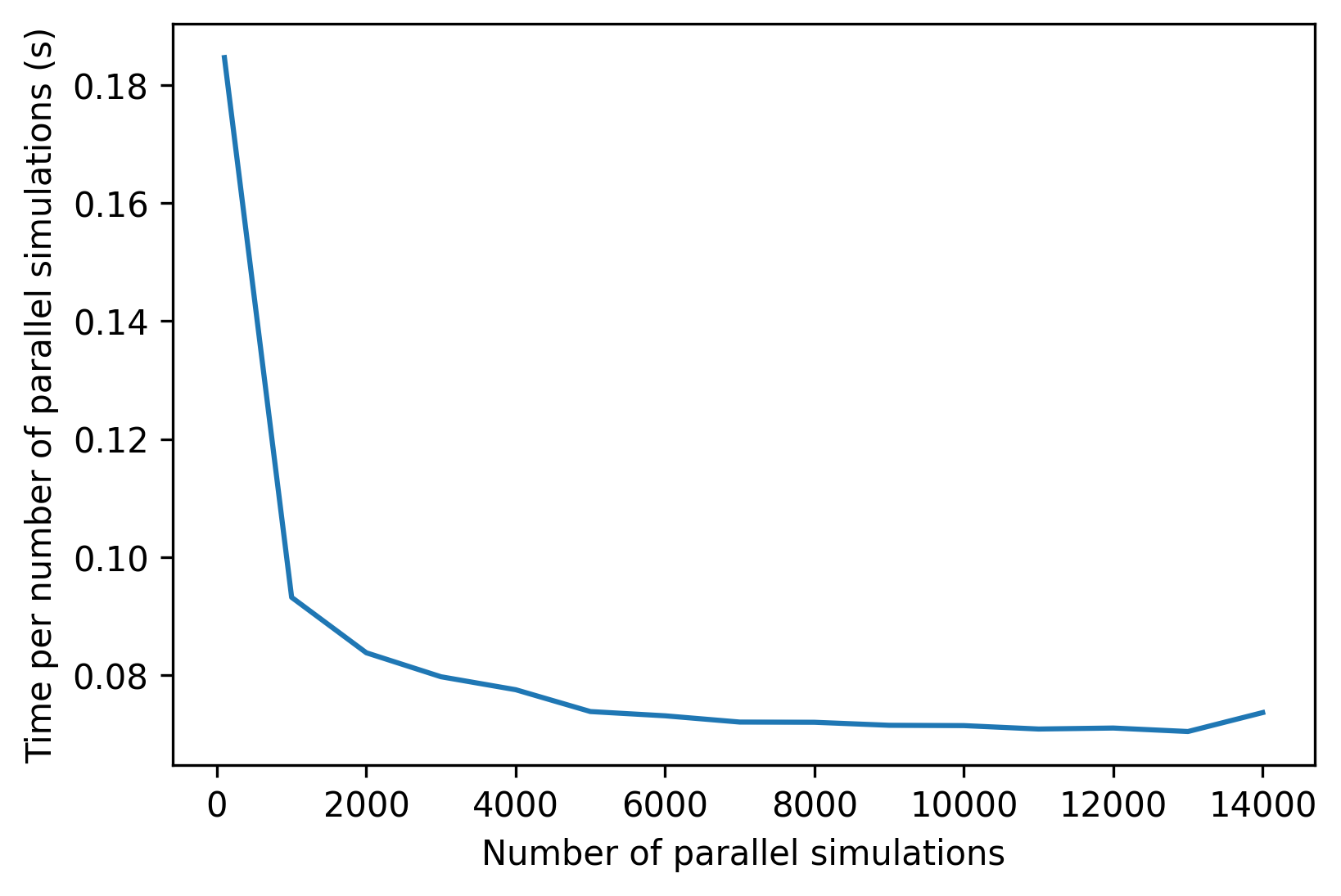}
\caption{For \code{odeint} with relative and absolute tolerances set to \code{1e-10}, the ratio of simulation time over number of parallel simulations is plotted, as a function of number of parallel simulations. For a given number of simulations, the input parameters are randomly generated (using the same random generator seed), and are simulated simultaneously via vectorization on an Nvidia A100 GPU with 80GB of memory. Above $6000$ simulations, the curve becomes nearly constant, indicating that the total simulation time is becoming a linear function of the number of simulations, which indicates that the parallelization capabilities of the GPU has become saturated.}
\label{figure:gpu_parallelization_saturation}
\end{figure}

Full perturbative solver GPU performance data is given in Figures \ref{figure:solver_data_gpu_full} and \ref{figure:solver_data_gpu_grad_full}. The corresponding data run on CPU are given in Figures \ref{figure:solver_data_cpu_full} and \ref{figure:solver_data_cpu_grad_full}. For the CPU data we reduce the number of random inputs to $100$. As we are only using a single core, there are no concerns about ensuring we have enough inputs to ``saturate'' parallelization of the device. Figure \ref{figure:speedup_cpu} presents the CPU speedup data. In the CPU case, we observe that for high accuracies, the perturbative solvers are actually slower than \code{odeint}. For medium range accuracies, we observe a more modest speedup than in the GPU case, with up to $7\times$ speedup for gradient computation. Despite the presence of some speedup, realistically this implementation of the perturbative solvers will not be useful on CPU, as the number of instances that can be run in parallel on multiple cores will be limited due to memory requirements. By contrast, standard ODE solvers have minimal memory requirements, and therefore can scale to be run in parallel on many more CPU cores, eliminating any practical advantage of the perturbative solvers. As described in the main text, the memory requirements could be alleviated by implementing the core loop of the perturbative solvers in a serial way. Future versions of these solvers including this feature could lead to better comparisons on CPU.

\begin{figure}[h!]
\includegraphics[scale=.6]{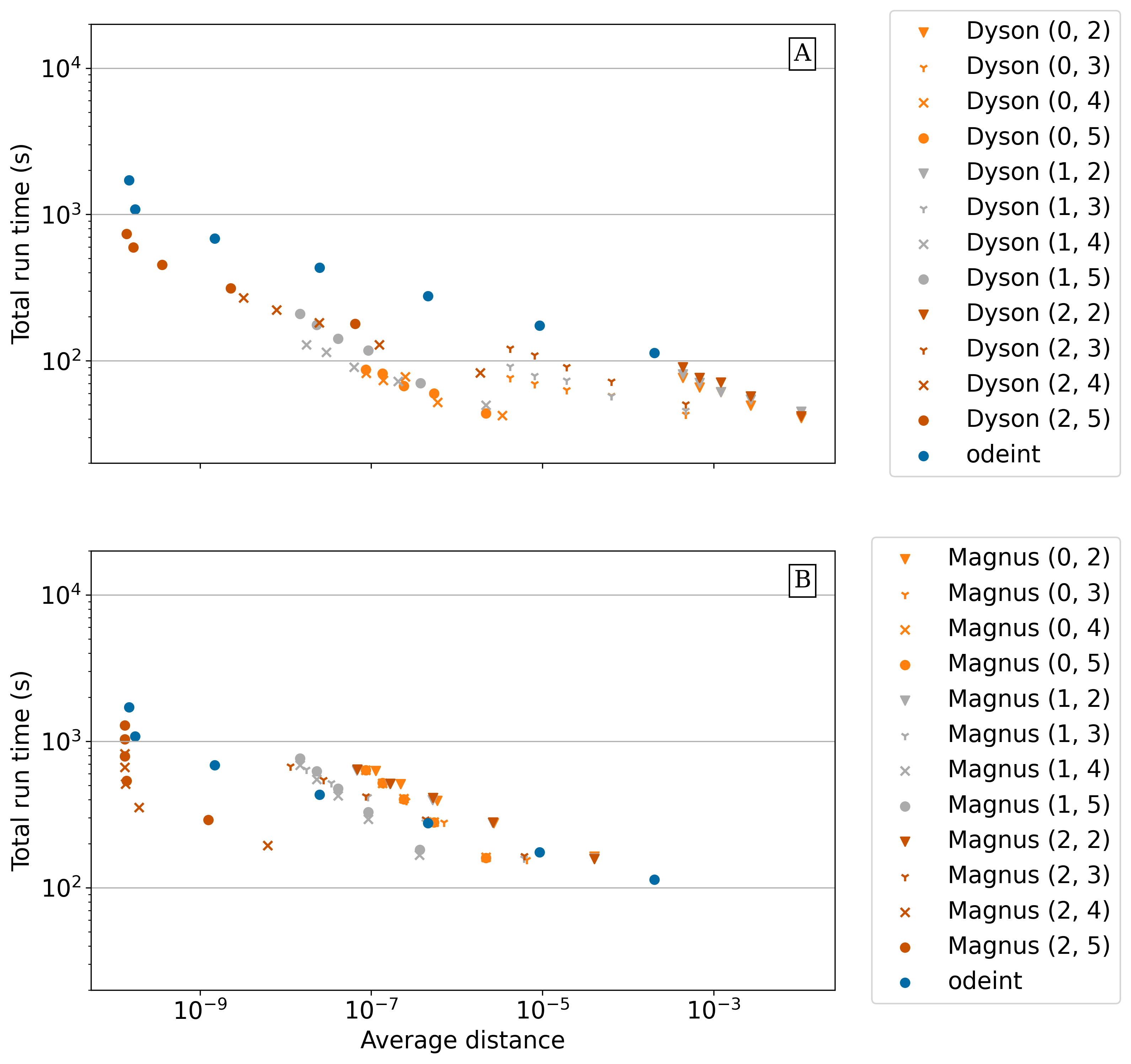}
\caption{Total Runtime v.s. Average Distance comparison on GPU for (A) Dyson-based perturbative solvers and (B) Magnus-based perturbative solvers. For both Dyson and Magnus solvers, the label $(d, n)$ denotes the configuration parameters used: a $d$-order Chebyshev approximation of the signals, and an $n^{th}$ order perturbative expansion. For a given solution $U$, distance is computed via the metric $\norm{U - V}_2 / \sqrt{d}$, where $V$ is a benchmark solution computed using \code{odeint} at the smallest possible tolerance values. Average distance is the arithmetic mean of these values. For each configuration, the data points correspond to various numbers of time-steps $M=10^4$, $2 \times 10^4$, $3 \times 10^4$, $4 \times 10^4$, or $5 \times 10^4$. The ODE solver was vectorized over all $7000$ inputs.}
\label{figure:solver_data_gpu_full}
\end{figure}

\begin{figure}[h!]
\includegraphics[scale=.6]{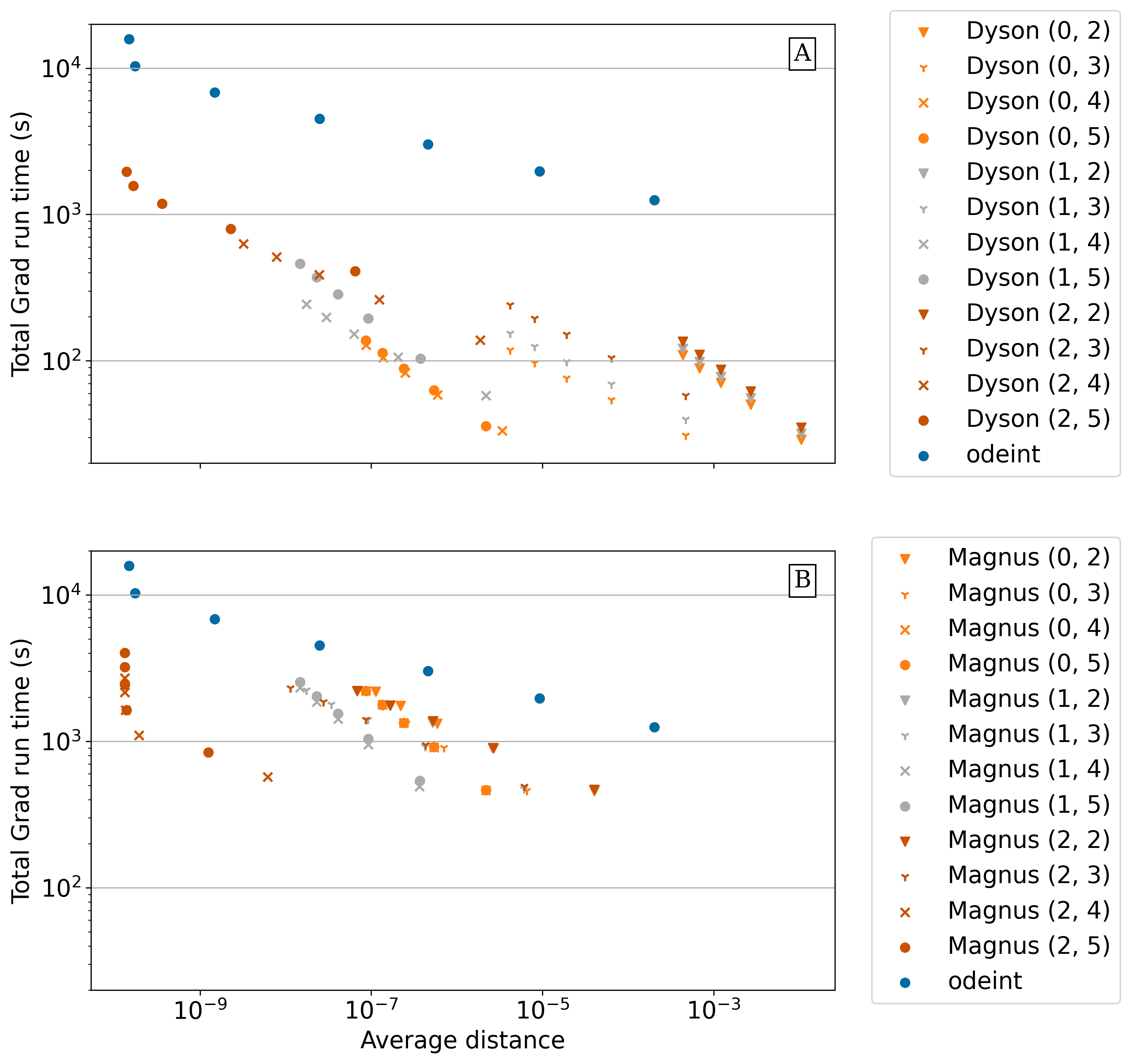}
\caption{Total Runtime v.s. Average Distance comparison for computing gradients on GPU for (A) Dyson-based perturbative solvers and (B) Magnus-based perturbative solvers. See the caption for Figure \ref{figure:solver_data_gpu_full} for further details on reading this plot.}
\label{figure:solver_data_gpu_grad_full}
\end{figure}

\begin{figure}[h!]
\includegraphics[scale=.6]{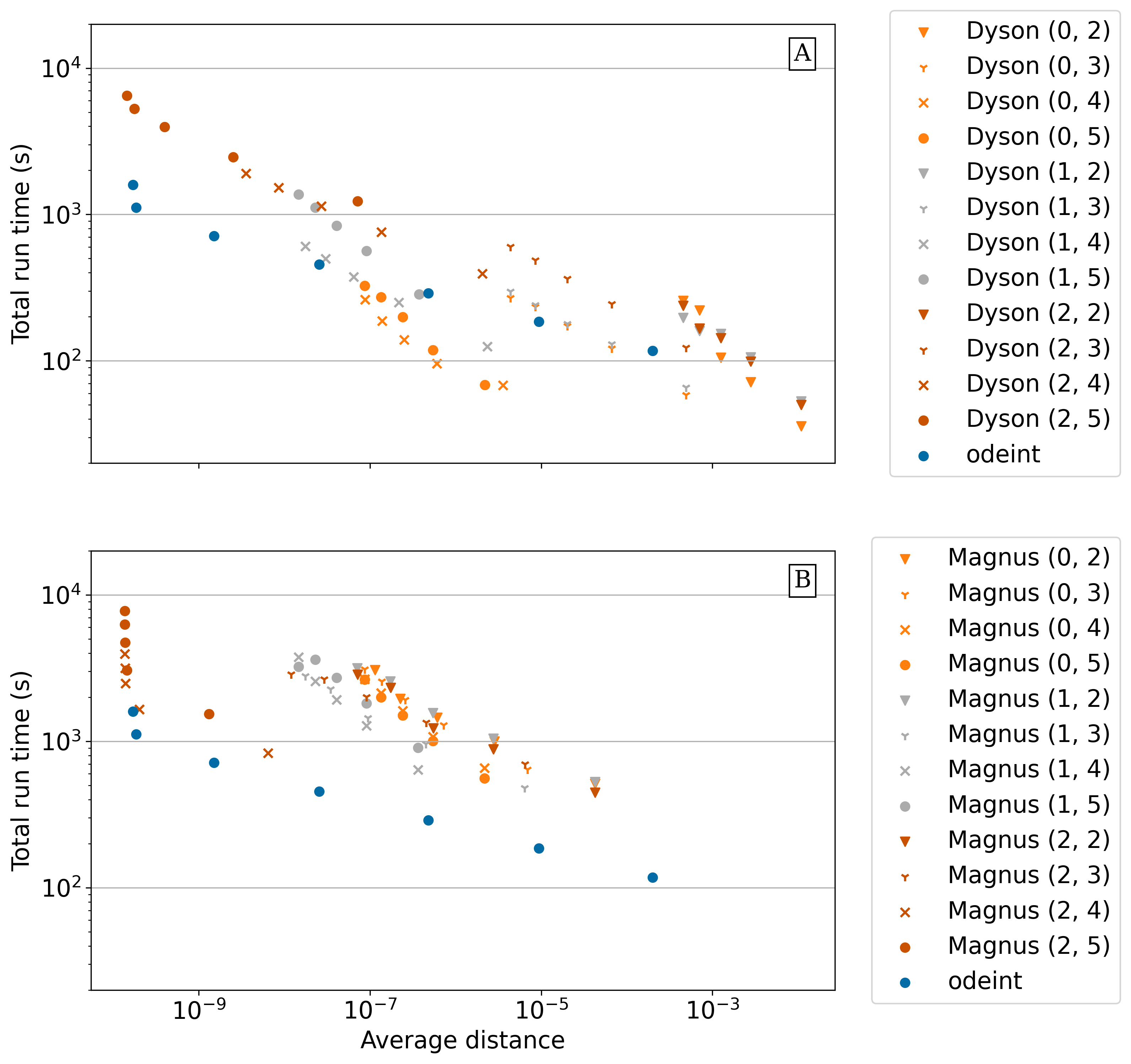}
\caption{Total Runtime v.s. Average Distance comparison on CPU for (A) Dyson-based perturbative solvers and (B) Magnus-based perturbative solvers. This data was generated using $100$ inputs run serially on a single-core CPU for all solvers. See the caption for Figure \ref{figure:solver_data_gpu_full} for a description of the legend.}
\label{figure:solver_data_cpu_full}
\end{figure}

\begin{figure}[h!]
\includegraphics[scale=.6]{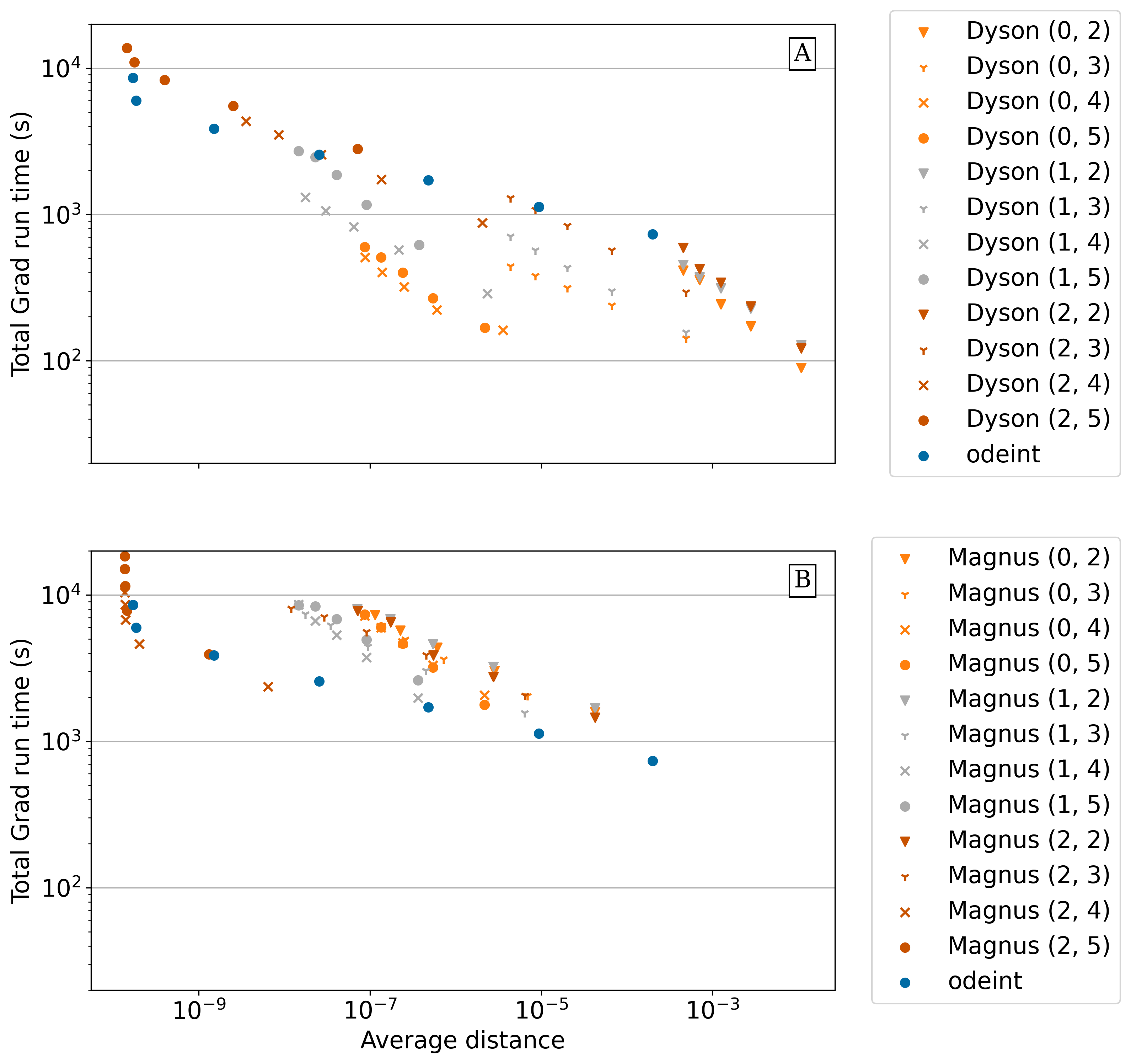}
\caption{Total Runtime v.s. Average Distance comparison for computing gradients on CPU for (A) Dyson-based perturbative solvers and (B) Magnus-based perturbative solvers. This data was generated using $100$ inputs run serially on a single-core CPU for all solvers. See the caption for Figure \ref{figure:solver_data_gpu_full} for a description of the legend.}
\label{figure:solver_data_cpu_grad_full}
\end{figure}

\clearpage

\begin{figure}[h!]
\centering
\hspace*{-1.25cm}
\includegraphics[scale=.45]{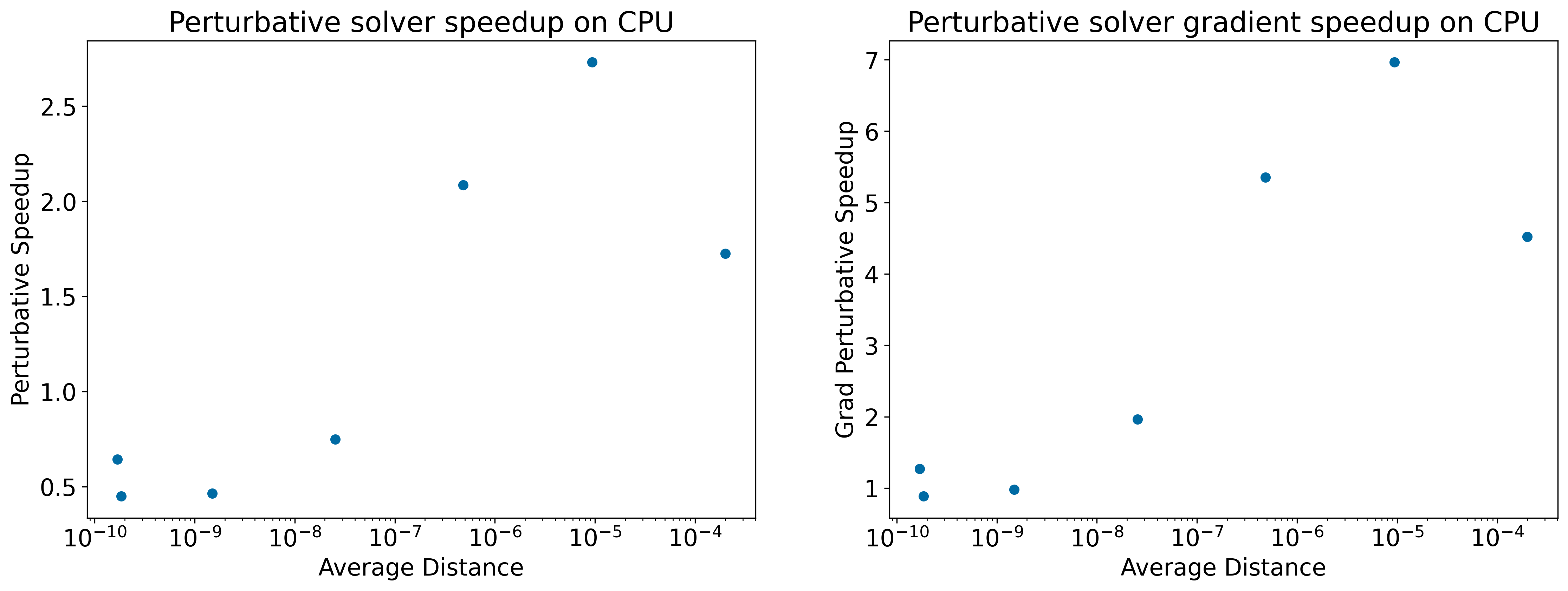}
\caption{Speedup vs average distance for the data shown in Figures \ref{figure:solver_data_cpu_full} and \ref{figure:solver_data_cpu_grad_full}. For each ODE solver point with the fastest time for a given average distance, we find the fastest perturbative solver point whose average distance is no larger than that of the ODE solver. The ``speedup'' is the ratio of the ODE solver speed over the perturbative solver speed.}
\label{figure:speedup_cpu}
\end{figure}
\section{Code examples} \label{app:code}

This appendix contains code blocks implemented with Qiskit Dynamics version $0.3.0$, demonstrating how to perform computations described in the main text. See the API documentation for the latest usage details.

\subsection{Constructing unitary approximations with the Magnus expansion} \label{app:code_solve_lmde_perturbation}

The following code block computes the unitary approximation of Section \ref{section:fidelity_approximation} using the \code{solve\_lmde\_perturbation} function. The pre-defined variables are:
\begin{itemize}
	\item \code{perturb\_list}: A list of callable functions for evaluating each perturbation.
	\item \code{T}: The time to integrate.
	\item \code{magnus\_order}: The order of the Magnus expansion to compute.
	\item \code{generator}: A callable function computing the unperturbed interaction frame generator, in this case the function evaluating $-iH_\emptyset(t)$.
	\item \code{c}: The perturbation coefficient values.
\end{itemize}
Additional package-specific information is supplied in the comments.

\begin{python}
# compute unperturbed unitary and Magnus expansion
# 'jax_odeint' is an integration method available in 
# Qiskit Dynamics
results = solve_lmde_perturbation(
	perturbations=perturb_list,
	t_span=[0, T],
	expansion_method='magnus',
	expansion_order=magnus_order,
	generator=generator,
	integration_method='jax_odeint',
	rtol=1e-12,
	atol=1e-12
)

# Extract unperturbed unitary from the results object
U = results.y[-1]

# construct ArrayPolynomial for Magnus expansion
term_labels = results.perturbation_results.expansion_labels
perturbation_terms = results.perturbation_results.expansion_terms
magnus_expansion = ArrayPolynomial(
	array_coefficients=perturbation_terms,
	monomial_labels=term_labels,
)

# compute approximate unitary for a given value of c
# jexpm is the JAX matrix exponentiation routine
U_approx = U @ jexpm(magnus_expansion(c).data)
\end{python}
The final line in the above code can be called repeatedly for different values of \code{c}.

\subsection{Robustness objective construction} \label{app:robustness_objective_code}

We walk through the computation of the $h_I(b)$ matrices described in Section \ref{section:robustness}. Let \code{mp} denote the \code{ArrayPolynomial} instance representing computed Magnus expansion truncation $\Omega(c, b)$, as constructed in the preceding section.

First, from $\Omega(c, b)$, we construct $\Omega(c, b)P - \frac{Tr(\Omega(c, b)P)}{2} P$. This can be done by interacting with an \code{ArrayPolynomial} using the same syntax as a standard \code{numpy} array:
\begin{python}
# project onto subspace by restricting to first two columns
mp = mp[:, 0:2]

# remove part of mp proportional to projection
mp_nontrivial =  (
	mp - mp.trace() * np.eye(5, 2, dtype=complex) / 2
)
\end{python}
In the above \code{mp\_nontrivial} is a \code{ArrayPolynomial} instance representing the array-valued polynomial given by $\Omega(c, b)P - \frac{Tr(\Omega(c, b)P)}{2} P$. 

Finally, we construct an \code{ArrayPolynomial} representing $h(c, b)$ (for fixed $b$) by applying the definition of the Frobenius norm to \code{mp\_nontrivial}, but when doing so, we only keep the terms $h_I(b)$ for which $\int_{c \in D} dc p(c) c_I \neq 0$. Let \code{multiset\_list} denote the list of \code{Multiset} instances representing the $I$ corresponding to the non-zero moments, and let \code{nonzero\_moments} denote the computed non-zero moments with the same ordering as \code{multiset\_list}. We compute $g(b)$ as follows (with the variable \code{robustness\_integral} corresponding to $g(b)$):
\begin{python}
# compute inner product of mp_nontrivial with itself
# keeping only terms corresponding to nonzero moments
mp_norm = mp_nontrivial.conj().mul(
	mp_nontrivial, 
	monomial_filter=lambda x: x in nonzero_multisets
).real.sum()

# compute integral by taking inner product of
# nonzero moments and the h_I coefficients
robustness_integral = np.dot(
	mp_norm.array_coefficients, nonzero_moments[1:]
).data
\end{python}

\subsection{Perturbative solver API} \label{app:perturbative_solver_api}

The classes \code{DysonSolver} and \code{MagnusSolver} implement the perturbative solvers described in Section \ref{section:numerical_integrators}. Both classes are instantiated with the following signature:
\begin{python}
solver = PerturbativeSolver(
	operators: List[Operator],
	rotating_frame: Union[Array, Operator, RotatingFrame, None],
	dt: float,
	carrier_freqs: Array,
	chebyshev_orders: List[int],
	expansion_order: Optional[int] = None,
	expansion_labels: Optional[List[Multiset]] = None,
	integration_method: Optional[str] = None,
	include_imag: Optional[List[bool]] = None,
	**kwargs
)
\end{python}
where:
\begin{itemize}
	\item \code{operators} is a list of the operators $A_j$.
	\item \code{rotating\_frame} is the frame operator $F$.
	\item \code{dt} is the step size $\Delta t$.
	\item \code{carrier\_freqs} is the list of analog carrier frequencies relative to which the expansion is computed. Note that these must be specified in frequency units, rather than angular frequency, as above.\footnote{I.e. a frequency $\nu_j$ specified in this list corresponds to $\omega_j = 2 \pi \nu_j$ in the mathematical presentation of the solver.}
	\item \code{chebyshev\_orders} is a list of integers indicating the number of Chebyshev terms to include when approximating each envelope, as in Equation \eqref{equation:env_approx}. A different value must be specified for each signal, with $0$ indicating a constant approximation.
	\item \code{expansion\_order} and \code{expansion\_labels} indicate the terms in the expansion to use when approximating the solution over each interval, with all terms up to \code{expansion\_order} being used, and any additional terms in \code{expansion\_labels} being included.
	\item \code{integration\_method} is the ODE method used to compute the perturbation terms; any method available via \code{qiskit\_dynamics.solve\_ode} can be used.
	\item \code{include\_imag} is a list of booleans indicating whether to keep, on a signal by signal basis, the second term in the decomposition in Equation \eqref{equation:real_imag_decomposition}.\footnote{This term may be dropped if $\omega_j=0$, or if there is some guarantee that $\Im[f_{j,m}e^{i\omega_jt_0}]=0$ over each interval startpoint. We note that, even if in the model there is a non-trivial carrier frequency for a given signal, it is possible to set $\omega_j=0$ and absorb the complex oscillator into the definition of the envelope. This will come at a potential cost of requiring smaller $\Delta t$ for accuracy, but at the benefit of fewer terms in the expansion.}
	\item \code{kwargs} are any additional keyword arguments to pass to \code{qiskit\_dynamics.solve\_ode} when computing the expansion.
\end{itemize}

The pre-computation of the expansion occurs at object instantiation. Once instantiated, the solver can be run on different envelopes by calling the \code{solve} method:
\begin{python}
yf = solver.solve(t0, n_steps, y0, signals)
\end{python}
where
\begin{itemize}
	\item \code{t0} and \code{n\_steps} specify the interval to solver over: $[\code{t0}, \code{t0 + n\_steps * dt}]$.
	\item \code{y0} is the initial state of the linear matrix differential equation.
	\item \code{signals} is a list of \code{qiskit\_dynamics.Signal} objects representing the time-dependent signals appearing in the decomposition of the generator. The carrier frequencies of these signals will be shifted to the reference carrier frequencies used at instantiation, with any leftover frequency absorbed into the envelope.
\end{itemize}
The method returns the final state \code{yf}.

While their usage is technical, \code{DysonSolver} and \code{MagnusSolver} enable usage of these methods while only requiring a mathematical description of the problem. Internally, they make use of the \code{solve\_lmde\_perturbation} function and \code{ArrayPolynomial} class.

\bibliography{dyson_magnus}
\bibliographystyle{unsrt}
\end{document}